\theoremstyle{thmstyleone}%
\newtheorem{theorem}{Theorem}
\newtheorem{proposition}[theorem]{Proposition}%
\theoremstyle{thmstyletwo}%
\newtheorem{remark}{Remark}%
\theoremstyle{thmstylethree}%
\newtheorem{definition}{Definition}%
\newtheorem*{principle}{Quantum Bayes' Rule}
\begin{document}

\title{A Quantum Bayes' Rule and Related Inference}

\author{Huayu Liu}  

\affiliation{%
 School of Statistics and Mathematics, Central University of Finance and Economics, South Road, Beijing, 100081, China\\
}%

\date{\today}

\begin{abstract}
In this work a quantum analogue of Bayesian inference is considered. Based on the notion of instrument, we propose a quantum analogue of Bayes' rule, which elaborates how a prior normal state updates under observations. Besides, we investigate the limit of posterior normal state as the number of observations goes to infinity. After that, we generalize the fundamental notions and results of Bayesian inference according to quantum Bayes' rule. It is noted that our theory not only retains the classical one as a special case but possesses many new features as well.
\end{abstract}

\keywords{instrument, posterior normal state, sequential measurement scheme, Bayesian inference, large sample property
}
\maketitle

\section{Introduction}\label{sec1}
Bayesian inference was born in the 1920s and reached its heyday in the 1950s, and is now widely used in science and engineering. In the past century, quantum physics has flourished and merged with many fields.
About fifty years ago, an interdisciplinary field called quantum statistics 
was born \cite{bib1,bib2}. In this field there are many studies in which Bayesian analysis are applied to quantum physics, as shown in \cite{bib3,bib4,bib5,bib6,bib7,bib8}.
What is at the heart of Bayesian analysis is Bayes' rule, which elaborates how a prior distribution updates under observations. Intuitively, the update of a prior distribution according to Bayes' rule is quite similar to the update of a quantum state according to a specific measurement. We will see that the former is a special case of the latter. To understand this, first we have to embed a probability space into a complex Hilbert space. A probability space is a triad $(X,\mathscr{A},{\rm P})$ where $(X,\mathscr{A})$ is a measurable space and ${\rm P}$ a probability measures on $(X,\mathscr{A})$. For simplicity, here we consider a probability space with finite samples.

Let $(X,2^{X},{\rm P})$ be a probability space where $X=\{x_{1},\cdots,x_{n}\}$, $2^{X}$ the power of $X$ (i.e. the set of all subsets of $X$) and ${\rm P}$ a probability measure on $(X,2^{X})$. Let $\mathbb{H}$ be a $n$-dimensional complex Hilbert space and $\{\psi_{j}\}_{j=1}^{n}$ an orthonormal basis of $\mathbb{H}$. Denote by $c$ the one-to-one correspondence between $\{x_{i}\}$ and the projector $|\psi_{i}\rangle\langle\psi_{i}|$ for all $i\in[n]$, where $[n]:=\{1,2,\cdots,n\}$. Then there is a density operator $\rho=\sum_{i=1}^{n}{\rm P}(\{x_{i}\})|\psi_{i}\rangle\langle\psi_{i}|$ such that
\begin{align}
{\rm P}(A)&=\sum_{j\in\{i:x_{i}\in A\}}\text{tr}(\rho|\psi_{j}\rangle\langle\psi_{j}|)\\
&=\text{tr}(\rho P_{A})
\end{align}
for all $A\in 2^{X}$, where $P_{A}=\sum_{j\in\{i: x_{i}\in A\}}|\psi_{j}\rangle\langle\psi_{j}|$ is a projector.
Therefore, the conditional probability
\begin{align}
    {\rm P}(B|A)&=\cfrac{{\rm P}(B\cap A)}{{\rm P}(A)}\\
    &=\cfrac{\text{tr}(\rho P_{B}P_{A})}{\text{tr}(\rho P_{A})}\\
    &=\cfrac{\text{tr}(P_{B}P_{A}\rho P_{A}P_{B})}{\text{tr}(P_{A}\rho P_{A})}\\
    &=\text{tr}(P_{B}\rho_{A}P_{B})
\end{align}
for all $A,B\in2^{X}$ such that ${\rm P}(A)\neq 0$, where $\rho_{A}=P_{A}\rho P_{A}/\text{tr}(P_{A}\rho P_{A})$ is a density operator. Readers familiar with
quantum measurement may have recognized that operations $\rho\mapsto P_{A}\rho P_{A},\ \forall A\in 2^{X}$ indeed defines a L$\mathrm{\ddot{u}}$ders instrument, which describes the statistical properties of a specific family of measurements. The definition of instrument will be introduced in Sect.\ref{sec2}.

With the ingredients above, we now turn to the connection between Bayes' rule and instrument. Assume that the unknown parameter $\theta$ takes values in the set $\Theta$ that consists of $m$ elements and has a prior distribution $\Pi$. Assume that the observation $x$ takes values in the set $X$ that consists of $n$ elements and has the conditional distribution ${\rm P}_{\theta}$. Let $\mathbb{H}_{0}$ be a $m$-dimensional complex Hilbert space with $\{\phi_{l}\}_{l=1}^{m}$ an orthonormal basis and $\mathbb{H}_{1}$ a $n$-dimensional complex Hilbert space with $\{\psi_{j}\}_{j=1}^{n}$ an orthonormal basis. Denote by $c_{0}$ (resp. $c_{1}$) the one-to-one correspondence between $\{\theta_{k}\}$ and the projector $|\phi_{k}\rangle\langle\phi_{k}|$ for all $k\in[m]$ (resp. $\{x_{i}\}$ and $|\psi_{i}\rangle\langle\psi_{i}|$ for all $i\in[n]$). In order to understand the connection between Bayes' rule and instrument, first we have to embed the product probability space $(\Theta\times X,2^{\Theta}\times 2^{X},\Pi\times{\rm P}_{\theta})$ into a finite-dimensional complex Hilbert space, where $\Theta\times X$ is the Cartesian product, $2^{\Theta}\times 2^{X}$ the product $\sigma$-algebra, $\Pi\times {\rm P}_{\theta}$ the product probability measure on the measurable space $(\Theta\times X,2^{\Theta}\times 2^{X})$. But for brevity, we denote $\Pi\times{\rm P}_{\theta}$ as ${\rm P}_{\times}$ in the following. A convenient choice for the finite-dimensional complex Hilbert space for embedding is the tensor product $\mathbb{H}_{0}\otimes\mathbb{H}_{1}$ since $\{\phi_{l}\otimes\psi_{j}:l\in[m],j\in[n]\}$ is an orthonormal basis of $\mathbb{H}_{0}\otimes\mathbb{H}_{1}$ so that there is a natural one-to-one correspondence $c_{01}$ between $\{(\theta_{k},x_{i})\}$ and the projector $|\phi_{k}\otimes\psi_{i}\rangle\langle\phi_{k}\otimes\psi_{i}|$ for all $k\in[m],i\in[n]$. As before, there is a density operator $\rho=\sum_{k,i}{\rm P}_{\times}(\{(\theta_{k},x_{i})\})|\phi_{k}\otimes\psi_{i}\rangle\langle\phi_{k}\otimes\psi_{i}|$ such that
\begin{align}
{\rm P}_{\times}(E)&=\sum_{(l,j)\in\{(k,i):(\theta_{k},x_{i})\in E\}}\text{tr}(\rho|\phi_{l}\otimes\psi_{j}\rangle\langle\phi_{l}\otimes\psi_{j}|)\\
&=\text{tr}(\rho P_{E})
\end{align}
for all $E\in 2^{\Theta}\times 2^{X}$, where $P_{E}=\sum_{(l,j)\in\{(k,i):(\theta_{k},x_{i})\in E\}}|\phi_{l}\otimes\psi_{j}\rangle\langle\phi_{l}\otimes\psi_{j}|$ is a projector. Then the posterior density of $\theta$ under the observation $x$ (given ${\rm P}_{\times}(\{(\cdot,x)\})>0$) is
\begin{align}
    \pi(\theta|x)&=\cfrac{{\rm P}_{\times}(\{(\theta,x)\})}{{\rm P}_{\times}(\{(\cdot,x)\})}\\
    &=\text{tr}(P_{\{(\theta,\cdot)\}}\rho_{\{(\cdot,x)\}}P_{\{(\theta,\cdot)\}})
\end{align}
for all $\theta\in\Theta$, where $\{(\cdot,x)\}:=\{(\theta_{k},x):k\in[m]\}$
and $\rho_{\{(\cdot,x)\}}=P_{\{(\cdot,x)\}}\rho P_{\{(\cdot,x)\}}/\text{tr}(P_{\{(\cdot,x)\}}\\
\rho P_{\{(\cdot,x)\}})$ is a density operator. Again operations $\rho\mapsto P_{\{(\cdot,x)\}}\rho P_{\{(\cdot,x)\}}$ defines a L$\mathrm{\ddot{u}}$ders instrument.

The analysis above suggests that the Bayesian update  $\pi(\theta)\mapsto\pi(\theta|x)$ can be realized by first carrying out the instrumental update $\rho\mapsto\rho_{\{(\cdot,x)\}}$ and then applying Born's rule. In other words, through embedding, the update of a prior distribution according to Bayes' rule can be identified with the update of a prior normal state (describing a quantum system) according to a L$\mathrm{\ddot u}$ders instrument (describing a specific family of measurements).
We assert that this is also true for the general version of Bayesian update
and will give a proof in Sect.\ref{sec3}.

Quantum theory has been used outside of physics for quite some time. In cognition, decision-making and even economics, quantum theory excels at modeling non-commutative phenomena. Recently, \cite{bib33} successfully modeled both Question Order Effect and Response Replicability Effect in human cognition with a quantum system, suggesting that human cognitive processes are probably similar to the evolution of quantum states. In classical Bayesian inference, a Bayesian's knowledge of an object is modeled by a classical system and Bayes' rule tells a Bayesian how to update her knowledge of an object based on observations. Since there is evidence that quantum system models human cognition better, we wonder if there is a quantum analogue of Bayes' rule together with a quantum analogue of Bayesian inference, in which a Bayesian's knowledge of an object is modeled by a quantum system.

To the best of our knowledge, although a few attempts have been made to quantize Bayes' rule \cite{bib9,bib10,bib12,bib13,bib14,bib15}, there is still no research addressing a quantum analogue of Bayesian inference. In the following we will briefly review these articles.

In \cite{bib9}, a quantum analogue of Bayes' rule is proposed and the exact condition with respect to the validity of it is explored.
Bayes' rule is generalized in \cite{bib10} with the prior being a density matrix and the likelihood being a covariance matrix. In \cite{bib12}, an inherently diagrammatic formulation of quantum Bayes' rule is proposed and a necessary and sufficient condition for the existence of Bayesian inverse in the setting of finite-dimensional $C^{*}$-algebras is provided. A quantum analogue of Bayes' rule is put forward in \cite{bib13} based on the notion of operator valued measure and quantum random variable. 
The graphical framework for Bayesian inference raised in \cite{bib14} is sufficiently general to cover both the standard case and the proposals for quantum Bayesian inference in which the degrees of belief are considered to be represented by density operators instead of probability distributions.
The approach of maximizing quantum relative entropy is followed in \cite{bib15} to study a quantum analogue of Bayes' rule, resulting in some generalizations.

In this work we focus on a quantum analogue of Bayesian inference. We find that through proper embedding, the Bayesian update of a prior distribution can be identified with the instrumental update of a prior normal state. Based on the notions of von Neumann algebra, normal state, observable, and instrument, we put forward a quantum analogue of Bayes' rule, which is concise and has clear physical meaning. Besides, as with Bayes' rule, it is compatible with sequential measurement schemes. By the way, we obtain a sufficient condition for sequential measurements to be joint measurements. Parallel to the asymptotic normality of posterior distribution, we obtain two sufficient conditions for the convergence of posterior normal state, and give the definition of the weak consistency of posterior normal state by analogy to the definition of the weak consistency of posterior distribution, and thus obtain two sufficient conditions for the weak consistency of posterior normal state. Then we generalize the fundamental notions and results of Bayesian inference according to quantum Bayes' rule. Fortunately, our theory retains the classical one as a special case, although we note that for a given quantum Bayesian decision problem, a quantum Bayes solution and a quantum posterior solution are generally no longer equivalent.

Our manuscript is organized as follows. 
Sect.\ref{sec2} is devoted to presenting some facts about operator valued measure, instrument and a family of posterior normal states.
Then in Sect.\ref{sec3} and Sect.\ref{sec4}, we focus on a quantum Bayes' rule and the limit of posterior normal state, respectively.
Next in Sect.\ref{sec5}, we move to a quantum analogue of Bayesian inference. Finally Sect.\ref{sec6} is a discussion.

\section{Preliminaries}\label{sec2}

In general the sample space may have infinitely many elements. In order to understand the connection between Bayes' rule and instrument in this case, first we have to embed a probability space with an arbitrary measurable space into a complex Hilbert space. It is not difficult to find that the key to achieving this is to construct a map from the $\sigma$-algebra $\mathscr{A}$ to the set of projectors  $\{P_{A}:A\in\mathscr{A}\}$ so that there is a one-to-one correspondence between the probability measure ${\rm P}$ and the density operator $\rho$. In fact this map is an observable. Observable is one of the important notions in quantum measurement that we will introduce below. Let $(X,\mathscr{A})$ be a measurable space and $\mathbb{H}$ a complex Hilbert space. Denote by $\mathcal{B}^{+}(\mathbb{H})$ the set of positive bounded linear operators on $\mathbb{H}$ and by $\mathcal{G}(\mathbb{H})$ the set of density operators on $\mathbb{H}$.
\begin{definition}
    A map $\nu:\mathscr{A}\to\mathcal{B}^{+}(\mathbb{H})$ is called an observable iff\\
     (\romannumeral1) $\nu(X)=1$;\\
     (\romannumeral2) For any countable collection of sets $\{A_{k}\}_{k\in\mathbb{N}_{+}}\subseteq
    \mathscr{A}$ with $A_{i}\cap A_{j}=\varnothing$ for $i\neq j$ we have
    \begin{align}
        \nu\left(\bigcup_{k=1}^{\infty}A_{k}\right)=\sum_{k=1}^{\infty}\nu(A_{k}),\ \ \text{weakly.}
    \end{align}
\end{definition}
    Moreover, $\nu$ is called a sharp observable iff $\nu^{*}(A)=\nu(A)=\nu^{2}(A)$ for all $A\in\mathscr{A}$.
    The convex set of observables $\nu:\mathscr{A}\to\mathcal{B}^{+}(\mathbb{H})$ is denoted by $\mathcal{O}(\mathscr{A},\mathbb{H})$. 
    A convex combination of $\nu_{1}$ and $\nu_{2}$ in $\mathcal{O}(\mathscr{A},\mathbb{H})$ (i.e. $t\nu_{1}+(1-t)\nu_{2},\ 0<t<1$) can be viewed as a randomization of measuring processes described by $\nu_{1}$ and $\nu_{2}$.
    
    Let $\nu:\mathscr{A}\to\mathcal{B}^{+}(\mathbb{H})$ be an observable. For any density operator $\rho\in\mathcal{G}(\mathbb{H})$, the probability measure $\nu_{\rho}$ induced by $\nu$ is defined by
\begin{align}
    \nu_{\rho}(A) = \text{tr}[\rho\nu(A)],\ \forall A\in\mathscr{A}\text{.}
\end{align}

One may question the capacity of a complex Hilbert space $\mathbb{H}$. In other words, one may wonder whether there is a probability measure ${\rm P}$ on the measurable space $(X,\mathscr{A})$ such that for all observables $\nu\in\mathcal{O}(\mathscr{A},\mathbb{H})$ and all density operators $\rho\in\mathcal{G}(\mathbb{H})$ we have ${\rm P}(A)\neq\text{tr}[\rho\nu(A)]$, for some $A\in\mathscr{A}$.
Fortunately this will never happen. It is shown in \cite{bib27} that for each probability measure ${\rm P}$ on the measurable space $(X,\mathscr{A})$ and each density operator $\rho\in\mathcal{G}(\mathbb{H})$ there is a unique observable $\nu:\mathscr{A}\to\mathcal{B}^{+}(\mathbb{H})$ such that ${\rm P}(A)=\text{tr}[\rho\nu(A)],\ \forall A\in\mathscr{A}$. This guarantees that we can embed any probability space into a complex Hilbert space.

As we discussed in Sect.\ref{sec1}, there is a connection between Bayes' rule and instrument. The notion of instrument is a bit more complicated. To introduce this, first we have to agree on some notations.
Let $\mathscr{M}\subseteq\mathcal{B}(\mathbb{H})$ be a von Neumann algebra.
Denote by $\mathscr{M}_{*}$ the predual of $\mathscr{M}$ (i.e. the set of $\sigma$-weakly continuous bounded linear functionals on $\mathscr{M}$), by $\mathfrak{S}(\mathscr{M})$ the set of normal states on $\mathscr{M}$ (i.e. the set of positive and unit elements of $\mathscr{M}_{*}$), by $\langle\cdot,\cdot\rangle$ the duality pairing between $\mathscr{M}_{*}$ and $\mathscr{M}$ and by $\mathcal{B}^{+}(\mathscr{M}_{*})$ the set of positive bounded linear maps on $\mathscr{M}_{*}$.
    
    A map $\Psi\in\mathcal{B}^{+}(\mathscr{M}_{*})$ is called a subtransition iff it satisfies
    $\langle\Psi\varphi,1\rangle\leq\langle\varphi,1\rangle$ 
    for all $0\leq\varphi\in\mathscr{M}_{*}$. Moreover, if $\Psi$ satisfies 
    $\langle\Psi\varphi,1\rangle=\langle\varphi,1\rangle$  for all $\varphi\in\mathscr{M}_{*}$ 
    then $\Psi$ is called a transition. The dual of a subtransition (resp. transition) $\Psi$, denoted $\Phi$, is defined by $\langle\varphi,\Phi a\rangle=\langle\Psi\varphi,a\rangle$
for all $\varphi\in\mathscr{M}_{*}$ and $a\in\mathscr{M}$.
It is a normal positive linear map on $\mathscr{M}$ such that $\Phi 1\leq 1$ (resp. $\Phi 1=1$).
If $\Psi$ is completely positive (CP), then $\Psi$ is called an operation (resp. channel).
\begin{definition}
    A map $\mathfrak{I}:\mathscr{A}\to\mathcal{B}^{+}(\mathscr{M}_{*})$ is called an instrument 
    (normalized subtransition valued measure) iff
    
    (\romannumeral1) $\mathfrak{I}(X)$ is a transition;
    
    (\romannumeral2) For any countable collection of sets $\{A_{k}\}_{k\in\mathbb{N}_{+}}\subseteq
    \mathscr{A}$ with $A_{i}\cap A_{j}=\varnothing$ for $i\neq j$ we have 
    \begin{align}\label{Equ10}
        \mathfrak{I}\left(\bigcup_{k=1}^{\infty}A_{k}\right)=\sum_{k=1}^{\infty}\mathfrak{I}(A_{k})\text{,}
    \end{align}
    where the convergence on the right side of the equation $(\ref{Equ10})$ is with respect to the strong operator topology of $\mathcal{B}^{+}(\mathscr{M}_{*})$.
\end{definition}
The dual of an instrument $\mathfrak{I}$, denoted $\mathfrak{I}^{*}$, is defined by 
$\langle\varphi,\mathfrak{I}^{*}(A)a\rangle=\langle\mathfrak{I}(A)\varphi,a\rangle$
for all $A\in\mathscr{A},\ \varphi\in\mathscr{M}_{*}$ and $a\in\mathscr{M}$. 
It is a map from $\mathscr{A}$ to $\mathcal{L}^{+}(\mathscr{M})$ (i.e. the set of positive linear maps on $\mathscr{M}$).
The convex set of instruments $\mathfrak{I}:\mathscr{A}\to\mathcal{B}^{+}(\mathscr{M}_{*})$ is denoted by 
${\rm Ins}(\mathscr{A},\mathscr{M}_{*})$.
If $\mathfrak{I}(A)$ is an operation for all $A\in\mathscr{A}$, then 
$\mathfrak{I}$ is called a CP instrument.
The convex set of CP instruments $\mathfrak{I}:\mathscr{A}\to\mathcal{B}^{+}(\mathscr{M}_{*})$ is denoted by ${\rm CPIns}(\mathscr{A},\mathscr{M}_{*})$.

An instrument naturally induces an observable. Let $\mathfrak{I}:\mathscr{A}\to\mathcal{B}^{+}(\mathscr{M}_{*})$ be an instrument. The observable $\nu$ induced by $\mathfrak{I}$ is defined by
\begin{align}
\nu(A)=\mathfrak{I}^{*}(A)1,\ \forall A\in\mathscr{A}\text{.}
\end{align} 
Then for any $\varphi\in\mathfrak{S}(\mathscr{M})$, 
the probability measure ${\rm P}_{\varphi}$ induced by $\nu$ is defined by
\begin{align}
    {\rm P}_{\varphi}(A)=\langle\varphi,\nu(A)\rangle,\ \forall A\in\mathscr{A}.
\end{align}

Note that different instruments may induce the same observable. It is shown in \cite{bib25} that each instrument induces a unique observable and each observable can be induced by a unique class of instruments.

To measure a quantum system one needs an apparatus. In standard experimental scenarios, an apparatus will reveal the following two facts to the surveyor, called the statistical properties of an apparatus: (\romannumeral1) Probability that each measurement outcome occurs given a quantum state; (\romannumeral2) Update of quantum states given each measurement outcome. However, different apparatuses may have the same statistical properties so that it would be a little redundant to distinguish them if we only care about their statistical properties. Hence the notion of instrument was put forward. It is shown in \cite{bib18} that each apparatus corresponds to a unique instrument and each instrument corresponds to a unique statistical equivalence class of apparatuses.

Let $\mathfrak{I}:\mathscr{A}\to\mathcal{B}^{+}(\mathscr{M}_{*})$ be an instrument. For any $A\in\mathscr{A}$, ${\rm P}_{\varphi}(A)$ is 
the probability that $A$ occurs (iff the measurement outcome $x\in A$) when a quantum system $S$ described by a von Neumann algebra $\mathscr{M}$ is in the normal state $\varphi\in\mathfrak{S}(\mathscr{M})$ and a measurement with an apparatus corresponding to $\mathfrak{I}$ is performed. Furthermore, $\varphi_{A}=\mathfrak{I}(A)\varphi/{\rm P}_{\varphi}(A)$ is the normal state that the quantum system $S$ is in immediately after $A$ occurs (provided that ${\rm P}_{\varphi}(A)>0$).

Unfortunately, not all apparatuses are physically realizable. It is shown in \cite{bib27} that the corresponding instrument of a physically realizable apparatus should be CP. Furthermore, there is a one-to-one correspondence between the statistical equivalence classes of physically realizable apparatuses and CP instruments. In other words, each CP instrument can be realized physically by at least one apparatus.

It is known that the measurement outcome $\{x\}$ is an elementary event. However, it may happen that ${\rm P}(\{x\})=0$. So how will quantum states update given a measurement outcome $\{x\}$ such that ${\rm P}(\{x\})=0$? This question is settled by the notion of a family of posterior normal states. Let $\mathbb{H}$ be a complex Hilbert space, $\mathscr{M}\subseteq\mathcal{B}(\mathbb{H})$ a von Neumann algebra, $\mathfrak{I}:\mathscr{A}\to\mathcal{B}^{+}(\mathscr{M}_{*})$ an instrument and $\varphi\in\mathfrak{S}(\mathscr{M})$ a normal state.
\begin{definition}
    The set $\{\varphi_{x}:x\in X\}$ is 
    called a family of posterior normal states w.r.t. $(\mathfrak{I},\varphi)$
    iff\\
    (\romannumeral1) $\varphi_{x}$ is a normal state for all $x\in X$;\\
    (\romannumeral2) $\{\varphi_{x}:x\in X\}$ is weakly* ${\rm P}_{\varphi}$ measurable
    (i.e. the function $x\mapsto\langle\varphi_{x},a\rangle$ is ${\rm P}_{\varphi}$ measurable for any $a\in\mathscr{M}$);\\
    (\romannumeral3) For any $a\in\mathscr{M}$ and $A\in\mathscr{A}$,
    \begin{align}\label{Equ21}
    \int_{A}\langle\varphi_{x},a\rangle{\rm P}_{\varphi}(dx)=\langle\mathfrak{I}(A)\varphi,a\rangle\text{.}
    \end{align}
\end{definition}
If $\{\varphi_{x}:x\in X\}$ is a family of posterior normal states w.r.t. $(\mathfrak{I},\varphi)$, then it exists uniquely in the sense that if $\{\varphi_{x}':x\in X\}$ is another family of posterior normal states w.r.t. $(\mathfrak{I},\varphi)$ then for any $a\in\mathscr{M}$, $\langle\varphi_{x},a\rangle=\langle\varphi_{x}',a\rangle$ ${\rm P}_{\varphi}$-a.s.. The existence of a family of posterior normal states w.r.t. $(\mathfrak{I},\varphi)$ is discussed thoroughly in \cite{bib20}.

\section{Quantum Bayes' Rule}\label{sec3}
To illustrate our theory, and to answer the question remained in Sect.\ref{sec1}, we begin this section by exploring the connection between the general version of Bayes' rule and instrument.
Recall that in Sect.\ref{sec1} the product probability space $(\Theta\times X,2^{\Theta}\times2^{X},{\rm P}_{\times})$ is embedded into a $(m\times n)$-dimensional complex Hilbert space $\mathbb{H}$ through a sharp observable $B\mapsto P_{B}$, as a result of which, the update of a prior distribution according to Bayes' rule can be identified with the update of a prior density operator according to a L$\mathrm{\ddot{u}}$ders instrument. However, it suffices to reach the same conclusion by embedding the probability spaces $(\Theta,2^{\Theta},\Pi)$ and $(X,2^{X},\int1_{(\cdot)}{\rm P}_{\theta}d\Pi)$ into an arbitrary complex Hilbert space whose dimension is no less than $m\times n$ through a sharp observable $\nu:2^{\Theta}\to\mathcal{B}^{+}(\mathbb{H})$ and a L$\mathrm{\ddot{u}}$ders instrument $\mathfrak{I}:2^{X}\to\mathcal{B}^{+}(\mathcal{B}_{1}(\mathbb{H}))$ inducing a sharp observable $\lambda$ compatible with $\nu$ (i.e. the commutator $[\nu(E),\lambda(A)]=0$ for all $E\in2^{\Theta}$ and $A\in2^{X}$), respectively. Following the idea above we will prove that through proper embedding the update of a prior distribution according to the general version of Bayes' rule is nothing but the update of a prior normal state according to a specific instrument.

    Let $(\Theta,\mathscr{E},\nu)$ and $(X,\mathscr{A},\mu)$
    be $\sigma$-finite measure spaces and $p(x\lvert\theta)$ a nonnegative $\mathscr{E}\times\mathscr{A}$ 
    measurable real valued function satisfying
    \begin{align}
    \int_{X}p(x\lvert\theta)\mu(dx)=1,\ \forall\theta\in\Theta,
    \end{align}
    so that the function
    \begin{align}
    K:(\theta,A)\mapsto\int_{A}p(x\lvert\theta)\mu(dx),\ \forall (\theta,A)\in\Theta\times\mathscr{A}
    \end{align}
    is a Markov kernel. Denote by $L^{\infty}(\nu)\subseteq\mathcal{B}(L^{2}(\nu))$ the abelian von Neumann algebra of $\nu$-a.e. equivalence classes of essentially bounded $\mathscr{E}$ measurable functions from $\Theta$ to the complex field $\mathbb{C}$, where $L^{2}(\nu)$ is the complex Hilbert space of $\nu$-a.e. equivalence classes of square integrable $\mathscr{E}$ measurable functions from $\Theta$ to $\mathbb{C}$. Identify $L^{\infty}(\nu)_{*}$ with $L^{1}(\nu)$ and denote by $L_{+,1}^{1}(\nu)$ the set of normal states. 
    Let $\tau:\mathscr{E}\to L^{\infty}(\nu)$ be a map defined by 
    \begin{align}
        E\mapsto 1_{E},\ \forall E\in\mathscr{E},
    \end{align}
    where $1_{E}$ is an indicator. In fact $\tau$ is an observable since (\romannumeral1) $1_{E}\ge 0$ for all $E\in\mathscr{E}$; (\romannumeral2) $1_{\Theta}$ is the identity and (\romannumeral3) $1_{\cup_{k\ge 1} E_{k}}=\sum_{k\ge 1}1_{E_{k}}$ for any countable collection of mutually disjoint measurable sets $\{E_{k}\}_{k\ge 1}$.
    Therefore, we can embed any probability space $(\Theta,\mathscr{E},\Pi)$ satisfying $\Pi\ll\nu$ (i.e. $\Pi$ is absolutely continuous with respect to $\nu$) into the complex Hilbert space $L^{2}(\nu)$ through the observable $\tau$. Note that in the case mentioned in Sect.\ref{sec1} we can not determine the normal state (density operator) uniquely if we only embed $(\Theta,2^{\Theta},\Pi)$ into $\mathbb{H}$ through a sharp observable (even after embedding both if the dimension of $\mathbb{H}$ is larger than $m\times n$). However, the normal state can be uniquely determined if we embed $(\Theta,\mathscr{E},\Pi)$ into $L^{2}(\nu)$ through $\tau$ since by Radon-Nikod${\rm \acute{y}}$m theorem there is a one-to-one correspondence between probability measures that are absolutely continuous with respect to $\nu$ and normal states on $L^{\infty}(\nu)$. 
    
    Let ${\rm P}^{\circ}$ be a probability measure on $(X,\mathscr{A})$ defined by ${\rm P}^{\circ}(A)=\int_{\Theta}1(\theta)K(\theta,A)\pi(\theta)\nu(d\theta)$ for all $A\in\mathscr{A}$, where $\pi(\theta)\in L_{+,1}^{1}(\nu)$ (i.e. $\pi(\theta)\ge 0$ and $\int\lvert\pi\rvert d\nu=1$) is a normal state. We have to embed $(X,\mathscr{A},{\rm P}^{\circ})$ into $L^{2}(\nu)$ through an instrument. 
    Let $\mathfrak{I}:\mathscr{A}\to\mathcal{B}^{+}(L^{1}(\nu))$ be a map defined by
    \begin{align}
       A\mapsto K(\theta,A)\cdot,\ \forall A\in\mathscr{A},\\
        K[f]:=[Kf],\ \forall [f]\in L^{1}(\nu).
    \end{align}
    Needless to say, $\mathfrak{I}$ is an instrument, as verified below.
    (\romannumeral1) $\int_{\Theta}1(\theta)K(\theta,A)f(\theta)\nu(d\theta)\leq\int_{\Theta}1(\theta)f(\theta)\nu(d\theta)$ for all $A\in\mathscr{A}$ and $L^{1}(\nu)\ni f\ge 0$ since $0\leq K(\theta,A)\leq 1$ for all $\theta\in\Theta$ and $A\in\mathscr{A}$.
    (\romannumeral2) $\int_{\Theta}1(\theta)K(\theta,X)f(\theta)\nu(d\theta)=\int_{\Theta}1(\theta)f(\theta)\nu(d\theta)$ for all $A\in\mathscr{A}$ and $L^{1}(\nu)\in f\ge 0$ since $K(\theta,X)=1$ for all $\theta\in\Theta$.
    (\romannumeral3)
    The $\sigma$-additivity of $\mathfrak{I}$ follows directly from the $\sigma$-additivity of $K(\theta,\cdot)$ as a probability measure for all $\theta\in\Theta$. 
    
    The following result shows that $\{\pi(\theta|x):x\in X\}$ is indeed a family of posterior normal states w.r.t. $(\mathfrak{I},\pi(\theta))$, where $\pi(\theta|x)=p(x|\theta)\pi(\theta)/\int_{\Theta}p(x|\theta)\pi(\theta)\nu(d\theta)$ and $\pi(\theta)\in L_{+,1}^{1}(\nu)$.
    \begin{theorem}\label{theo1}
         $\{\pi(\theta|x):x\in X\}$ is a family of posterior normal states w.r.t. $(\mathfrak{I},\pi(\theta))$. Moreover, it is unique in the sense that if $\{\pi(\theta|x)':x\in X\}$ is another family of posterior normal states w.r.t. $(\mathfrak{I},\pi(\theta))$, then $\pi(\theta|x)$ and $\pi(\theta|x)'$ only differ on a $(\nu\times\mu)$-null set. 
    \end{theorem}
\begin{proof}
    Denote by $\pi(\theta)$ an element of $\pi(\theta)$. 
    (\romannumeral1)
    Obviously $p(x|\theta)\pi(\theta)$ is a nonnegative $(\nu\times\mu)$-a.e., $\mathscr{E}\times\mathscr{A}$ measurable real valued function such that $\int_{\Theta}p(x|\theta)\pi(\theta)\nu(d\theta)\ge 0$ for all $x\in X$ and $\int_{\Theta}|\pi(\theta|x)|\nu(d\theta)=1$.
    Thus $\pi(\theta|x)$ is a normal state for all $x\in X$.
    (\romannumeral2)
    Denote by $N_{0}$ the $\nu$-null set such that $\pi(\theta)\ge0$ on $N_{0}^{c}$. 
    We first show that $\int p\pi d(\nu\times\mu)<\infty$.
    \begin{align}
        \int p\pi d(\nu\times\mu)&=\int1_{N^{c}}p\pi d(\nu\times\mu)\\
        &=\int_{\Theta}\pi(\theta)\nu(d\theta)\int_{X}1_{N^{c}}(\theta,x)p(x|\theta)\mu(dx)\ \ \ (\text{by Tonelli theorem})\\
        &=\int_{\Theta} K(\theta,N^{c}_{\theta})\pi(\theta)\nu(d\theta)\\
        &=\int_{\Theta} K(\theta,N^{c}_{\theta})\Pi(d\theta)<\infty,
    \end{align}
    where $N^{c}$ is the complement of the $(\nu\times\mu)$-null set $N=N_{0}\times X$ and $N_{\theta}^{c}$ the section of $N^{c}$ at $\theta$. By Tonelli theorem the function $x\mapsto\int1_{N^{c}}p\pi d\nu=\int p\pi d\nu$ is nonnegative and $\mathscr{A}$ measurable.
    Let $g\in L^{\infty}(\nu)$ and denote by $g$ an element of $g$. Then we have to show that the function $x\mapsto\int gp\pi d\nu$ is ${\rm P}^{\circ}$ measurable. Since $g=\Re(g)+{\rm i}\Im(g)=[\Re(g)]^{+}-[\Re(g)]^{-}+{\rm i}\lbrace[\Im(g)]^{+}-[\Im(g)]^{-}\rbrace$, the integral $\int gp\pi d\nu=\int[\Re(g)]^{+}p\pi d\nu-\int [\Re(g)]^{-}p\pi d\nu+{\rm i}\lbrace\int[\Im(g)]^{+}p\pi d\nu-\int[\Im(g)]^{-}p\pi d\nu\rbrace$, where $\Re(\cdot),\Im(\cdot),(\cdot)^{+},(\cdot)^{-}$ denote the real part, imaginary part, positive part and negative part of a number, respectively. 
    Thus it suffices to show that the function $\ell:x\mapsto\int[\Re(g)]^{+}p\pi d\nu$ is ${\rm P}^{\circ}$ measurable. To do this, we first show that $\int[\Re(g)]^{+}p\pi d(\nu\times\mu)<\infty$.
    \begin{align}
        \int[\Re(g)]^{+}p\pi d(\nu\times\mu)&=\int1_{N^{c}}[\Re(g)]^{+}p\pi d(\nu\times\mu)\\
        &=\int_{\Theta}\Re^{+}[g(\theta)]\pi(\theta)\nu(d\theta)\int_{X}1_{N^{c}}(\theta,x)p(x|\theta)\mu(dx)\\
        &=\int_{\Theta}\Re^{+}[g(\theta)]K(\theta,N^{c}_{\theta})\pi(\theta)\nu(d\theta)\\
        &\leq\int_{\Theta}\lvert\Re(g)\rvert K(\theta,N^{c}_{\theta})\Pi(d\theta)\\
        &\leq\int_{\Theta}\lVert g\rVert_{\infty}K(\theta,N^{c}_{\theta})\Pi(d\theta)<\infty,
    \end{align}
    where $\lVert\cdot\rVert_{\infty}$ is the essential supremum norm on $L^{\infty}(\nu)$. Then by Fubini theorem the nonnegative function $\ell$ is $\mathscr{A}$ measurable but only finite $\mu$-a.e.. Denote by $N_{1}$ the $\mu$-null set such that $\ell$ is finite on $N_{1}^{c}$.
    Note that $1_{N_{1}^{c}}\ell$ is a $\mathscr{A}$ measurable real valued function and equal to $\ell$ $\mu$-a.e.. This implies that $1_{N_{1}^{c}}\ell$ is equal to $\ell$ ${\rm P}^{\circ}$-a.s. since ${\rm P}^{\circ}$ is absolutely continuous with respect to $\mu$. Due to the fact that a function $f$ from $\Theta$ to the extended real line is $\mu$ measurable if and only if there is $\mathscr{A}$ measurable extended real valued function $h$ such that $f=h$ $\mu$-a.e., we have actually proved that $\ell$ is ${\rm P}^{\circ}$ measurable. By the same way, the functions $x\mapsto\int [\Re(g)]^{-}p\pi d\nu,\ x\mapsto\int[\Im(g)]^{+}p\pi d\nu$ and $x\mapsto\int[\Im(g)]^{-}p\pi d\nu$ are all ${\rm P}^{\circ}$ measurable so that the function $x\mapsto\int gp\pi d\nu$ is ${\rm P}^{\circ}$ measurable. Hence the function $x\mapsto\int_{\Theta}g(\theta)\pi(\theta|x)\nu(d\theta)$ is ${\rm P}^{\circ}$ measurable (and also $\mu$ measurable) for all $g\in L^{\infty}(\nu)$.
    (\romannumeral3) Let $g\in L^{\infty}(\nu)$ and denote by $g$ an element of $g$. Without loss of generality, assume that $g$ is nonnegative. By (\romannumeral2), the function $x\mapsto\int_{\Theta}g(\theta)\pi(\theta|x)\nu(d\theta)$ is nonnegative and $\mu$ measurable. Consequently, there is a nonnegative $\mathscr{A}$ measurable function $r$ that is equal to $x\mapsto\int_{\Theta}g(\theta)\pi(\theta|x)\nu(d\theta)$ $\mu$-a.e.. Thus the integral
    \begin{align}
        \int_{A}\left[\int_{\Theta}g(\theta)\pi(\theta|x)\nu(d\theta)\right]\int_{\Theta}1(\theta)K(\theta,dx)\pi(\theta)\nu(d\theta)
        &=\int_{A}r(x)\int_{\Theta}1(\theta)K(\theta,dx)\pi(\theta)\nu(d\theta)\\
        &=\int_{\Theta}\Pi(d\theta)\int_{A}r(x)K(\theta,dx)\\
        &=\int_{\Theta}\Pi(d\theta)\int_{A}r(x)p(x|\theta)\mu(dx)\\
        &=\int_{A}r(x)\mu(dx)\int_{\Theta}p(x|\theta)\Pi(d\theta).
    \end{align}
    And the integral
    \begin{align}
        \int_{\Theta}g(\theta)K(\theta,A)\pi(\theta)\nu(d\theta)
        &=\int_{\Theta}g(\theta)\Pi(d\theta)\int_{A}p(x|\theta)\mu(dx)\\
        &=\int_{A}\mu(dx)\int_{\Theta}g(\theta)p(x|\theta)\Pi(d\theta)\ \ \ \text{(by Tonelli theorem)}.
    \end{align}
    Apparently $r(x)\int_{\Theta}p(x|\theta)\Pi(d\theta)=\int_{\Theta}g(\theta)p(x|\theta)\Pi(d\theta)$ $\mu$-a.e.. Hence for all $A\in\mathscr{A}$ and $g\in L^{\infty}(\nu)$ we have
    \begin{align}
     \int_{A}\left[\int_{\Theta}g(\theta)\pi(\theta|x)\nu(d\theta)\right]\int_{\Theta}1(\theta)K(\theta,dx)\pi(\theta)\nu(d\theta)=
      \int_{\Theta}g(\theta)K(\theta,A)\pi(\theta)\nu(d\theta)\label{3.20}.
    \end{align}
    
    Conversely, assume that a family of posterior normal states $\{\pi(\theta\lvert x):x\in X\}$ w.r.t. $(\mathfrak{I},\pi(\theta))$ exists, where $\pi(\theta)\in L_{+,1}^{1}(\nu)$ is a normal state.
    Then $\pi(\theta|x)$ shall satisfy the following equation 
    \begin{align}
        \int_{A}\left[\int_{\Theta}g(\theta)\pi(\theta\lvert x)\nu(d\theta)\right]\int_{\Theta}1(\theta)K(\theta,dx)\pi(\theta)\nu(d\theta)
       =\int_{\Theta}g(\theta)K(\theta,A)\pi(\theta)\nu(d\theta)
    \end{align} 
    for all $A\in\mathscr{A}$ and $L^{\infty}(\nu)\ni g\ge0$.
    By Tonelli theorem we have
    \begin{align}
       \int_{A}\left[\int_{\Theta}g(\theta)\pi(\theta\lvert x)\nu(d\theta)\right]\mu(dx)\int_{\Theta}p(x\lvert\theta)\pi(\theta)\nu(d\theta)
       =\int_{A}\mu(dx)\int_{\Theta}g(\theta)p(x\lvert\theta)\pi(\theta)\nu(d\theta)
    \end{align}
    for all $A\in\mathscr{A}$ and $L^{\infty}(\nu)\ni g\ge0$,
    which implies that 
    \begin{align}
        \pi(\theta\lvert x)\int_{\Theta}p(x\lvert\theta)\pi(\theta)\nu(d\theta)
        =p(x\lvert\theta)\pi(\theta)
    \end{align}
     holds on a set $M$ satisfying for $\mu$-a.e. $x$ the section $(M^{c})^{x}$ of $M^{c}$ is a $\nu$-null set. Since $\nu$ and $\mu$ are $\sigma$-finite, this is equivalent to $M^{c}$ is a $(\nu\times\mu)$-null set.
\end{proof}
     It is now clear that the Bayesian update $\pi(\theta)\mapsto\pi(\theta|x)$ of a prior density $\pi(\theta)$ is nothing but the update $\pi(\theta)\mapsto\pi(\theta|x)$ of a prior normal state $\pi(\theta)$ according to the instrument $K$.
     This inspired us to propose a quantum analogue of Bayes' rule. Indeed Bayes' rule says if a Bayesian uses an apparatus corresponding to the instrument $K$ to measure a quantum system $S$ described by the von Neumann algebra $L^{\infty}(\nu)$, she will obtain a measurement outcome $\{x\}$ and disturb the state of $S$ according to $\{x\}$. This disturbance is described by an update of the normal state $\varphi\mapsto\varphi_{x}$, where $\{\varphi_{x}:x\in X\}$ is a family of posterior normal states w.r.t. $(K,\pi(\theta))$.
     But what if a Bayesian uses an apparatus corresponding to an arbitrary instrument to measure a quantum system described by an arbitrary von Neumann algebra? To answer this, we put forward the following quantum Bayes' rule by analogy to the classical one. 
    \begin{principle}
        If a quantum system $S$ described by a von Neumann algebra $\mathscr{M}$ is in the prior normal state $\varphi$ before a measurement with an apparatus corresponding to an instrument $\mathfrak{I}:\mathscr{A}\to\mathcal{B}^{+}(\mathscr{M}_{*})$, then after the measurement $S$ will be in the posterior normal state
        $\varphi_{x}$ if the measurement outcome is $\{x\}$, where $\{\varphi_{x}:x\in X\}$ is a family of posterior normal states w.r.t. $(\mathfrak{I},\varphi)$.
    \end{principle}
        Apparently quantum Bayes' rule retains the classical one as a special case. In practice, one may measure a quantum system $S$ described by a von Neumann algebra $\mathscr{M}$ in a sequential measurement scheme, i.e. perform $n$ measurements with $n$ apparatuses (corresponding to $n$ instruments $\mathfrak{I}_{i}:\mathscr{A}_{i}\to\mathcal{B}^{+}(\mathscr{M}_{*}),i\in[n]$) on $S$ one immediately after another. If the quantum system $S$ is in the prior normal state $\varphi$ before these sequentially performed measurements, then by induction the probability that $A_{1},\cdots,A_{n}$ occurs one by one is $\langle\mathfrak{I}_{n}(A_{n})\circ\cdots\circ\mathfrak{I}_{1}(A_{1})\varphi,1\rangle$ and according to quantum Bayes' rule $S$ will be in the posterior normal state $\varphi_{x_{1}\cdots x_{n}}$ if the measurement outcomes are $\{x_{1}\},\cdots,\{x_{n}\}$.
    
    In classical Bayesian inference the probability space $(X,\mathscr{A},{\rm P}_{\theta})$ is usually a product of $n$ probability spaces $(X_{i},\mathscr{A}_{i},{\rm P}_{\theta}^{(i)}),i\in[n]$. This implies that the components of $x\in X$ is independent.
    Let $p(x_{i}|\theta)$ be a nonnegative $\mathscr{E}\times\mathscr{A}_{i}$ measurable real valued function satisfying $\int_{X_{i}} p(x_{i}|\theta)\mu_{i}(dx_{i})=1$ for all $i\in[n]$, where $\mu_{i}$ is a $\sigma$-finite measure on $(X_{i},\mathscr{A}_{i})$. Then the independence of the components of $x$ is equivalent to $p(x|\theta)=\prod_{i=1}^{n}p(x_{i}|\theta)\ \left(\prod_{i=1}^{n}\mu_{i}\right)$-a.e. for all $\theta\in\Theta$. 
    But for convenience, the independence of the components of $x$ is usually guaranteed by $p(x|\theta)=\prod_{i=1}^{n}p(x_{i}|\theta)$ for all $\theta\in\Theta$.
    And since the point-wise multiplication of two real valued functions is commutative, the update of a prior normal state $\pi(\theta)$ according to the instrument $K=\int 1_{(\cdot)}p(x|\theta)\left(\prod_{i=1}^{n}\mu_{i}\right)(dx)$ is equivalent to the sequential update of $\pi(\theta)$ according to $n$ instruments 
    $K_{i}=\int 1_{(\cdot)}p(x_{i}|\theta)\mu_{i}(dx_{i}),i\in[n]$ in any order.
    
    In the case above it holds that $K(\theta,\prod_{i=1}^{n}A_{i})=\prod_{i=1}^{n}K_{i}(\theta,A_{i})$ for all 
    $A_{i}\in\mathscr{A}_{i}$, i.e. $K$ is the unique kernel-like instrument that is equal to the composition of $n$ instruments $K_{i},i\in[n]$ on the measurable rectangles $\prod_{i=1}^{n}A_{i}$.
    In fact under some topological assumptions, the composition of a finite number of instruments can be uniquely extended to an instrument in the following sense. Let $(X_{i},\mathscr{B}(X_{i})),i\in[n]$ be measurable spaces 
    where $X_{i}$ is a second countable locally compact Hausdorff space and $\mathscr{B}(X_{i})$ the Borel $\sigma$-algebra on $X_{i}$. 
    Let $\mathfrak{I}_{i}:\mathscr{B}(X_{i})\to\mathcal{B}^{+}(\mathscr{M}_{*}),i\in[n]$ be instruments. Then according to \cite{bib24}, there is a unique instrument $\mathfrak{I}\in{\rm Ins}(\Pi_{i=1}^{n}\mathscr{B}(X_{i}),\mathscr{M}_{*})$ such that
\begin{align}
        \mathfrak{I}(A_{1}\times\cdots\times A_{n})=\mathfrak{I}_{n}(A_{n})\circ\cdots\circ\mathfrak{I}_{1}(A_{1}),\ \forall A_{i}\in\mathscr{B}(X_{i})\text{.}
\end{align}
 Thus if for all $i\in[n]$ $X_{i}$ is a second countable locally compact Hausdorff space and $\mathscr{A}_{i}$ the Borel $\sigma$-algebra on $X_{i}$, then $K$ is the unique instrument that agrees with the composition of $K_{i},i\in[n]$ on the measurable rectangles $\prod_{i=1}^{n}A_{i}$.

Note that $\varphi_{\prod_{i=1}^{n}A_{i}}=\varphi_{A_{1}\cdots A_{n}}$ provided that ${\rm P}_{\varphi}(\prod_{i=1}^{k}A_{i})>0,k\in[n]$ so it is natural to ask whether a family of posterior normal states w.r.t. $(\mathfrak{I},\varphi)$ is $\{\varphi_{x_{1}\cdots x_{n}}:x_{i}\in X_{i},i\in[n]\}$.
To see this, it suffices to consider the case $n=2$. Let $\mathfrak{I}_{i}:\mathscr{B}(X_{i})\to\mathcal{B}^{+}(\mathscr{M}_{*}),i\in[2]$ be instruments.
\begin{theorem}\label{the2}
     Assume that the map $x_{1}\mapsto\langle\mathfrak{I}_{2}(A_{2})\varphi_{x_{1}},1\rangle$ is $\mathscr{B}(X_{1})$ measurable for all $A_{2}\in\mathscr{B}(X_{2})$.
     Then $\{\varphi_{x_{1}x_{2}}:x_{i}\in X_{i},i\in[2]\}$ is a family of posterior normal states w.r.t. $(\mathfrak{I},\varphi)$.
\end{theorem}
\begin{proof} 
(\romannumeral1) Apparently, $\varphi_{x_{1}x_{2}}$ is a normal state for all $(x_{1},x_{2})\in X_{1}\times X_{2}$. 
(\romannumeral2) Since $\{\varphi_{x_{1}}:x_{1}\in X_{1}\}$ is a family of posterior normal states w.r.t. $(\mathfrak{I}_{1},\varphi)$, we have $\int_{A_{1}}\langle\varphi_{x_{1}},a\rangle{\rm P}_{\varphi}(dx_{1})=\langle\mathfrak{I}_{1}(A_{1})\varphi,a\rangle$ for all $a\in\mathscr{M}$ and $A_{1}\in\mathscr{B}(X_{1})$. Setting $a=\mathfrak{I}_{2}^{*}(A_{2})b,\ 0\leq b\in\mathscr{M}$ and noticing that $\{\varphi_{x_{1}x_{2}}:x_{2}\in X_{2}\}$ is a family of posterior normal states w.r.t. $(\mathfrak{I}_{2},\varphi_{x_{1}})$, we have 
\begin{align}
    \int_{A_{1}}{\rm P}_{\varphi}(dx_{1})\int_{A_{2}}\langle\varphi_{x_{1}x_{2}},b\rangle{\rm P}_{\varphi_{x_{1}}}(dx_{2})
    &=\langle\mathfrak{I}_{2}(A_{2})\circ\mathfrak{I}_{1}(A_{1})\varphi,b\rangle\\
    &=\int_{A_{1}\times A_{2}}\langle\varphi_{x},b\rangle\langle\mathfrak{I}(dx)\varphi,1\rangle,
\end{align}
for all $A_{1}\in\mathscr{B}(X_{1})$ and $A_{2}\in\mathscr{B}(X_{2})$. Since the map $x_{1}\mapsto\langle\mathfrak{I}_{2}(A_{2})\varphi_{x_{1}},1\rangle$ is $\mathscr{B}(X_{1})$ measurable (of course ${\rm P}_{\varphi}$ measurable) for all $A_{2}\in\mathscr{B}(X_{2})$, $\langle\mathfrak{I}_{2}(A_{2})\varphi_{x_{1}},1\rangle$ is a Markov kernel.
As a result, ${\rm P}_{\varphi}\times{\rm P}_{\varphi_{x_{1}}}$ is the unique probability measure on the measurable space $(X_{1}\times X_{2},\mathscr{B}(X_{1}\times X_{2}))$ such that 
\begin{align}
{\rm P}_{\varphi}\times{\rm P}_{\varphi_{x_{1}}}(A_{1}\times A_{2})
&=\int_{A_{1}}\langle\mathfrak{I}_{2}(A_{2})\varphi_{x_{1}},1\rangle{\rm P}_{\varphi}(dx_{1})\\
&=\langle\mathfrak{I}_{2}(A_{2})\circ\mathfrak{I}_{1}(A_{1})\varphi,1\rangle\\
&=\langle\mathfrak{I}(A_{1}\times A_{2})\varphi,1\rangle.
\end{align}
Hence by Carath${\rm\acute{e}}$odory theorem ${\rm P}_{\varphi}\times{\rm P}_{\varphi_{x_{1}}}$ is indeed $\langle\mathfrak{I}\varphi,1\rangle$.
Therefore
\begin{align}
\int_{A_{1}}{\rm P}_{\varphi}(dx_{1})\int_{A_{2}}\langle\varphi_{x_{1}x_{2}},b\rangle{\rm P}_{\varphi_{x_{1}}}(dx_{2})
&=\int_{A_{1}\times A_{2}}\langle\varphi_{x},b\rangle\langle\mathfrak{I}(dx)\varphi,1\rangle\\
&=\int_{A_{1}}{\rm P}_{\varphi}(dx_{1})\int_{A_{2}}\langle\varphi_{(x_{1},x_{2})},b\rangle{\rm P}_{\varphi_{x_{1}}}(dx_{2})
\end{align}
for all $A_{1}\in\mathscr{B}(X_{1})$ and $A_{2}\in\mathscr{B}(X_{2})$, which implies that there is a nonnegative $\mathscr{B}(X_{1}\times X_{2})$ measurable function that is equal to $\langle\varphi_{(x_{1},x_{2})},b\rangle$ $\langle\mathfrak{I}\varphi,1\rangle$-a.s. and agrees with $\langle\varphi_{x_{1}x_{2}},b\rangle$ on a set $M$ satisfying for ${\rm P}_{\varphi}$-a.s. $x_{1}$ the section $M^{c}_{x_{1}}$ of $M^{c}$ is a ${\rm P}_{\varphi_{{x}_{1}}}$-null set. Since ${\rm P}_{\varphi}$ and ${\rm P}_{\varphi_{x_{1}}}$ are finite, this is equivalent to $M^{c}$ is a $({\rm P}_{\varphi}\times{\rm P}_{\varphi_{x_{1}}})$-null set. Thus $\langle\varphi_{x_{1}x_{2}},b\rangle$ is $\langle\mathfrak{I}\varphi,1\rangle$ measurable for all $0\leq b\in\mathscr{M}$ (hence for all $b\in\mathscr{M}$).
(\romannumeral3) Apparently it holds that 
\begin{align}
        \int_{A}\langle\varphi_{x_{1}x_{2}},b\rangle\langle\mathfrak{I}(dx)\varphi,1\rangle=\langle\mathfrak{I}(A)\varphi,b\rangle
\end{align}
for all $b\in\mathscr{M}$ and $A\in\mathscr{B}(X_{1}\times X_{2})$.
\end{proof}
This gives a way to find out $\{\varphi_{(x_{1},x_{2})}:(x_{1},x_{2})\in X_{1}\times X_{2}\}$ by iterating $\{\varphi_{x_{1}}:x_{1}\in X_{1}\}$ and $\{\varphi_{x_{2}}:x_{2}\in X_{2}\}$ sequentially.

However, in general the composition of two instruments satisfies $[\mathfrak{I}_{1}\circ\mathfrak{I}_{2},\mathfrak{I}_{2}\circ\mathfrak{I}_{1}]\neq 0$ so that the sequential update of a prior normal state according to some instruments in different orders varies. 

In classical Bayesian inference, $A\mapsto K(\theta,A)1(\theta)$ is a joint observable of the observables $A_{i}\mapsto K_{i}(\theta,A_{i})1(\theta),i\in[n]$ in the sense that $K(\theta,A)1(\theta)=\left[\prod_{i=1}^{n}K_{i}(\theta,A_{i})\right]1(\theta)$ for all measurable rectangles $\prod_{i=1}^{n} A_{i}$. This means that we can measure the observables $A_{i}\mapsto K_{i}(\theta,A_{i})1(\theta),i\in[n]$ jointly by measuring them sequentially. However, this often fails in the quantum case since $\mathfrak{I}^{*}1$ is a joint observable of the observables $\mathfrak{I}_{1}^{*}(X_{1})\circ\cdots\circ\mathfrak{I}_{i-1}^{*}(X_{i-1})\circ\mathfrak{I}_{i}^{*}(\cdot)\circ\mathfrak{I}_{i+1}^{*}(X_{i+1})\circ\cdots\circ\mathfrak{I}_{n}^{*}(X_{n})1,i\in[n]$ and generally $\mathfrak{I}_{1}^{*}(X_{1})\circ\cdots\circ\mathfrak{I}_{i-1}^{*}(X_{i-1})\circ\mathfrak{I}_{i}^{*}(\cdot)\circ\mathfrak{I}_{i+1}^{*}(X_{i+1})\circ\cdots\circ\mathfrak{I}_{n}^{*}(X_{n})1\neq\mathfrak{I}_{i}^{*}1$ for $i\in\{2,\cdots,n\}$.
A sufficient condition that one can measure the observables $\mathfrak{I}_{i}^{*}{1},i\in[n]$ jointly by measuring them sequentially is given below and its proof is straight forward.
\begin{proposition}
   If $\mathfrak{I}_{i}^{*}$ commutes with $\mathfrak{I}_{1}^{*}\circ\cdots\circ\mathfrak{I}_{i-1}^{*}$ for any $i\in\{2,\cdots,n\}$ then 
   the observables $\nu_{i},i\in[n]$ induced by $\mathfrak{I}_{i},i\in[n]$ have $\mathfrak{I}^{*}1$ as a joint observable, where $\mathfrak{I}$ is the composition of $\mathfrak{I}_{i},i\in[n]$.
\end{proposition}

\section{Limit of Posterior Normal State}\label{sec4}
In classical Bayesian inference, under some regularity conditions the posterior distribution is asymptotically normal, i.e. it can be approximated by an appropriate normal distribution when the number of independently identically distributed observations is sufficiently large. However, if a quantum system $S$ described by a von Neumann algebra $\mathscr{M}$ is in the prior normal state $\varphi$ before sufficiently many sequentially performed measurements with apparatuses corresponding to instruments $\mathfrak{I}_{i},i\in[n]$, then after these measurements $S$ will probably not be in a converging posterior normal state. For example, identify $\mathcal{B}(\mathbb{H})_{*}$ with $\mathcal{B}_{1}(\mathbb{H})$ and
define an instrument $\mathfrak{I}\in{\rm Ins}(\{\varnothing,X\},\mathcal{B}_{1}(\mathbb{H}))$ by 
\begin{align}
    \mathfrak{I}(X)\cdot=u(\cdot)u^{*},
\end{align}
where $\{\varnothing,X\}$ is the trivial $\sigma$-algebra on $X$ and $u\in\mathcal{B}(\mathbb{H})$ a unitary operator. We assert that $\{\rho_{x}=u\rho u^{*}:x\in X\}$ is the unique family of posterior normal states w.r.t. $(\mathfrak{I},\rho)$. By repeatedly applying $\mathfrak{I}$ to the prior normal state $\rho$ one sees that the posterior normal state $\rho_{x}$ does not converge. However, in some cases the posterior normal state converges as the sample size goes to infinity.

Let $\mathfrak{I}:\{\varnothing,X\}\to\mathcal{B}^{+}({\rm M}_{k}(\mathbb{C}))$ be a CP instrument where ${\rm M}_{k}(\mathbb{C})$ is the von Neumann algebra of $k\times k$ complex matrices.
\begin{theorem}\label{the4}
    Assume that the spectrum of $\mathfrak{I}(X)$ has the unique eigenvalue $\{1\}$ on the unit circle and this eigenvalue is simple.
    Then there are positive numbers $C,\alpha>0$ such that for any $n\in\mathbb{N}_{+}$ we have
    \begin{align}
       \lVert\rho_{x_{1}\cdots x_{n}}-\rho_{*}\rVert_{1}\leq Ce^{-\alpha n},\ \forall \rho\in\mathcal{G}(\mathbb{C}^{k}),
    \end{align}
    where $\rho_{x_{1}\cdots x_{n}}=\mathfrak{I}^{n}\rho$ is the posterior normal state after performing the sequential measurement scheme $\mathfrak{I}^{n}$ and $\rho_{*}\in\mathcal{G}(\mathbb{C}^{k})$ satisfies $\mathfrak{I}(X)\rho_{*}=\rho_{*}$.
\end{theorem}
\begin{proof}
    We first show that $\{\varphi_{x}=\mathfrak{I}(X)\varphi:x\in X\}$ is the unique family of posterior normal states w.r.t. $(\mathfrak{I},\varphi)$, where $\mathfrak{I}:\{\varnothing,X\}\to\mathcal{B}^{+}(\mathscr{M}_{*})$ is an instrument and $\varphi$ is a prior normal state. Assume that $\{\varphi_{x}:x\in X\}$ is a family of posterior normal states w.r.t. $(\mathfrak{I},\varphi)$. Then for all $a\in\mathscr{M}$, the function $x\mapsto\langle\varphi_{x},a\rangle$ is ${\rm P}_{\varphi}$ measurable and the following equation holds
    \begin{align}
            \int_{X}\langle\varphi_{x},a\rangle{\rm P}_{\varphi}(dx)=\langle\mathfrak{I}(X)\varphi,a\rangle\text{.}
    \end{align}
    Since the probability space $(X,\{\varnothing,X\},{\rm P}_{\varphi})$ is complete, the function $x\mapsto\langle\varphi_{x},a\rangle$ is $\{\varnothing,X\}$ measurable for all $a\in\mathscr{M}$. Therefore $x\mapsto\langle\varphi_{x},a\rangle$ shall be constant for all $a\in\mathscr{M}$. This implies that for all $x\in X$, $\varphi_{x}=\varphi_{0}$ for some $\varphi_{0}\in\mathfrak{S}(\mathscr{M})$. Again due to the arbitrariness of $a$, $\varphi_{0}=\mathfrak{I}(X)\varphi$.
    The rest of the proof follows from ${\rm \cite{bib23}}$.
\end{proof}
Let $(\Omega,\mathscr{F},{\rm P})$ be a probability space and $\mathfrak{T}:\Omega\to\Omega$ an invertible, measure preserving, ergodic map (i.e for all $F\in\mathscr{F}$ satisfying $\mathfrak{T}^{-1}(F)=F$ we have ${\rm P}(F) = 0$ or $1$).
We say that a map $f\in\mathcal{B}^{+}({\rm M}_{k}(\mathbb{C}))$ is strictly positive iff $f(a)>0$ for all $a\in{\rm M}_{k}^{+}(\mathbb{C})\setminus\{0\}$. Let $f_{0}$ be a ${\rm CPIns}(\{\varnothing,X\},{\rm M}_{k}(\mathbb{C}))$-valued $\mathscr{F}$ measurable map. Define $f_{n}(\omega)=f_{0}[\mathfrak{T}^{n}(\omega)],\forall\omega\in \Omega,n\in\mathbb{Z}$.
\begin{theorem}\label{the5}
    Assume that there is a positive integer $N$ such that 
    \begin{align}
        &(1)\ {\rm P}[f_{N}(X)\circ\cdots\circ f_{0}(X)\ \text{is strictly positive}]> 0;\\
        &(2)\ {\rm P}[\ker(f_{0}^{*}(X))\cap{\rm M}_{k}^{+}(\mathbb{C})=\{0\}]=1.
    \end{align}
    Then there is a positive number $\alpha>0$, 
    a $\mathcal{G}(\mathbb{C}^{k})$ valued $\mathscr{F}$ measurable map $\rho_{0}$ and a sequence of $\mathscr{F}$ measurable maps $\{\rho_{n}=\rho_{0}[\mathfrak{T}^{n}(\omega)],\forall \omega\in\Omega,n\in\mathbb{Z}\}$ such that 
    for any $(m,z,n)\in\mathbb{Z}^{3}$ satisfying $m<n,m\leq z,z\leq n$ we have
    \begin{align}
        \lVert\rho_{x_{m}\cdots x_{n}}-\rho_{n}\rVert_{1}\leq C_{\alpha,z}e^{-\alpha(n-m)},\ \forall \rho\in\mathcal{G}(\mathbb{C}^{k}),
    \end{align}
    where $\rho_{x_{m}\cdots x_{n}}=f_{n}(X)\circ\cdots\circ f_{m}(X)\rho$ is the posterior normal state after performing the sequential measurement scheme $f_{n}\circ\cdots\circ f_{m}$ and $C_{\alpha,z}$ is finite ${\rm P}$-a.s.
\end{theorem}
\begin{proof}
Since a CP instrument $\mathfrak{I}:\{\varnothing,X\}\to\mathcal{B}^{+}(\mathscr{M}_{*})$ is uniquely determined by its channel $\mathfrak{I}(X)$, there is a natural one-to-one correspondence between CP instruments $\mathfrak{I}:\{\varnothing,X\}\to\mathcal{B}^{+}({\rm M}_{k}(\mathbb{C}))$ and channels $\Psi\in\mathcal{C}({\rm M}_{k}(\mathbb{C}))$. The rest of the proof follows from Lemma 3.8, Lemma 3.11 and Lemma 3.14 in \cite{bib22}.
\end{proof}
In classical Bayesian inference, a Bayesian is able to learn about an unknown parameter $\theta$ by updating her prior, which to some extent, can be viewed as her prior knowledge about $\theta$, to her posterior. In other words, if an oracle were to know the true value of the parameter, a Bayesian shall ensure that with enough observations she would get close to this true value. This is guaranteed by the weak consistency of posterior distribution. Let $X_{n}$ be a random $n$-tuple (i.i.d. observations) whose conditional distribution is ${\rm P}_{\theta}^{n}$ (an $n$-fold product of ${\rm P}_{\theta}$), $\Pi$ a prior distribution and $\Pi_{X_{n}}$ the posterior distribution. Assume that the true value of $\theta$ is $\theta_{0}$ (iff the ground truth distribution of $\theta$ is the Dirac measure $\Pi_{0}=1_{(\cdot)}(\theta_{0})$). Then under some conditions, $\Pi_{X^{(n)}}$ converges weakly to $\Pi_{0}$ in probability ${\rm P}^{n}_{\theta_{0}}$, as $n\to\infty$.

In the quantum case, the weak consistency of posterior normal state might be defined as follows. Let $\varphi$ be a prior normal state that a quantum Bayesian stands for and $\varphi_{0}$ the ground truth. Let $\left\{\mathfrak{I}_{n}\in\text{Ins}\left(\prod_{i=1}^{n}\mathscr{A}_{i},\mathscr{M}_{*}\right)\right\}_{n=1}^{\infty}$ be a sequence of instruments such that $\langle\mathfrak{I}_{n+1}(A^{(n)}\times X_{n+1})\varphi,1\rangle=\langle\mathfrak{I}_{n}(A^{(n)})\varphi,1\rangle$ for all $A^{(n)}\in\prod_{i=1}^{n}\mathscr{A}_{i}$ and $n\in\mathbb{N}_{+}$. Denote by $\{\varphi_{x^{(n)}}:x^{(n)}\in\Pi_{i=1}^{n}X_{n}\}$ (resp. $\{\varphi_{0,x^{(n)}}:x^{(n)}\in\Pi_{i=1}^{n}X_{n}\}$) a family of posterior normal states w.r.t. $(\mathfrak{I}_{n},\varphi)$ (resp. $(\mathfrak{I}_{n},\varphi_{0})$). Then under some conditions $\varphi_{X^{(n)}}$ could converge to $\varphi_{0,X^{(n)}}$ with respect to an appropriate metrizable topology in probability $\langle\mathfrak{I}_{n}\varphi_{0},1\rangle$.

Assume that $X$ consists of a single element and equipped with the trivial topology. By the definition of the weak consistency of posterior normal state, Theorem \ref{the2} and Theorem \ref{the4}, we have
\begin{align}
\lim_{n\to\infty}\text{tr}(\mathfrak{I}^{n}\rho_{0})\left(\lVert\rho_{X_{1}\cdots X_{n}}-\rho_{0,X_{1}\cdots X_{n}}\rVert_{1}>\epsilon\right)=0,\ \forall\epsilon>0,\forall\rho,\rho_{0}\in\mathcal{G}(\mathbb{C}^{k}).
\end{align}
In other words, given sequential measurement scheme $\left\{\mathfrak{I}^{n}\right\}_{n=1}^{\infty}$, for any prior normal state $\rho\in\mathcal{G}(\mathbb{C}^{k})$, the posterior normal state $\rho_{X_{1}\cdots X_{n}}$ is weakly consistent at any $\rho_{0}\in\mathcal{G}(\mathbb{C}^{k})$. 

Similarly, by the definition of the weak consistency of posterior normal state, Theorem \ref{the2} and Theorem \ref{the5}, we have
\begin{align}
\lim_{(n-m)\to\infty}\text{tr}(f^{n-m}\rho_{0})\left(\lVert\rho_{X_{m}\cdots X_{n}}-\rho_{0,X_{m}\cdots X_{n}}\rVert_{1}>\epsilon\right)=0,\ {\rm P}\text{-a.s.},\forall\epsilon>0,\forall\rho,\rho_{0}\in\mathcal{G}(\mathbb{C}^{k}),
\end{align}
where $f^{n-m}=f_{n}\circ\cdots\circ f_{m}$. 
In other words, given sequential measurement scheme $f^{n-m},(m,n)\in\mathbb{Z}^{2},m<n$, for any prior normal state $\rho\in\mathcal{G}(\mathbb{C}^{k})$, the posterior normal state $\rho_{X_{m}\cdots X_{n}}$ is weakly consistent at any $\rho_{0}\in\mathcal{G}(\mathbb{C}^{k})$ almost surely.

Let $\mathbb{H}$ be a separable complex Hilbert space.
Define a CP instrument 
$\mathfrak{I}\in{\rm CPI}{\rm ns}(\mathscr{A},\mathcal{B}_{1}(\mathbb{H}))$ by
\begin{align}
    \mathfrak{I}(A)\cdot=\sum_{i\in A}a_{i}(\cdot)a_{i}^{*},\ \forall A\in\mathscr{A},
\end{align}
where $a_{i}\in\mathcal{B}(\mathbb{H})$ for all $i\in\mathbb{N}_{+}$ and $\sum_{i=1}^{\infty}a_{i}^{*}a_{i}=1$ in the strong operator topology of 
$\mathcal{B}(\mathbb{H})$. Define a Markov kernel $K:\mathcal{G}(\mathbb{H})\times\mathscr{B}(\mathcal{G}(\mathbb{H}))\to[0,1]$ by
\begin{align}
  (\rho,G)\mapsto\sum_{i=1}^{\infty}{\rm tr}(a_{i}\rho a_{i}^{*})1_{G}(\rho_{i}),
\end{align}
where $\rho_{i}=a_{i}\rho a_{i}^{*}/{\rm tr}(a_{i}\rho a_{i}^{*})$. Let $(\Omega,\mathscr{F},{\rm P})$ be a probability space and $\{X_{n}:\Omega\to\mathcal{G}(\mathbb{H})\}_{n=0}^{\infty}$ a Markov chain such that ${\rm P}(X_{n+1}\in G|X_{n}=\rho)=K(\rho,G),\ \forall n\in\mathbb{N}$.
Denote by ${\rm P}_{\rho_{0}}$ the probability measure of $\{X_{n}\}_{n=0}^{\infty}$ initializing at $\rho_{0}$, i.e. ${\rm P}_{\rho_{0}}(F)={\rm P}(F|X_{0}=\rho_{0})$ for all $F\in\mathscr{F}$.
\begin{theorem}{\rm\cite{bib21}}
     Let $\{X_{n}\}_{n=0}^{\infty}$ be the Markov chain starting at $\rho_{0}$. Then there is a $\{\rho\in\mathcal{G}(\mathbb{H}):\sum_{i=1}^{\infty}a_{i}\rho a_{i}^{*}=\rho\}$ valued random variable $X_{\rho_{0}}$ such that
    \begin{align}
        \cfrac{1}{n}\sum_{i=1}^{n}X_{n}\to X_{\rho_{0}},\ \ \text{weakly}^{*}\text{ and ${\rm P}_{\rho_{0}}$-a.s.}
    \end{align}
    as $n\to\infty$.
\end{theorem} 

\section{Quantum Bayesian Inference}\label{sec5}
With quantum Bayes' rule, a quantum analogue of Bayesian inference is ready to debut. A quantum statistical model is a pair $(\nu,\Lambda)$
    where $\nu:\mathscr{A}\to\mathscr{M}$ is an observable
    and $\Lambda\subseteq\mathfrak{S}(\mathscr{M})$ a set of normal states.
    Moreover, if $\Lambda=\{\varphi(\theta)\in\mathfrak{S}(\mathscr{M}):\theta\in\Theta\}$, then $(\nu,\varphi(\theta))$ is called a parametric quantum statistical model. 
    
    A quantum Bayesian decision problem consists of five elements: a parameter space $(\Theta,\mathscr{E},\Pi)$, which is a probability space; a random experiment $(\nu,\varphi(\theta))$, which is a parametric quantum statistical model; an action space $(Y,\mathscr{B})$, which is a measurable space; a randomized decision rule $\delta:X\times\mathscr{B}\to[0,1]$, which is a Markov kernel; a loss function $L:\Theta\times Y\to\mathbb{R}_{+}$ such that $L(\theta,y)$ is $\mathscr{B}$ measurable for all $\theta\in\Theta$. Denote by $\Delta$ a class of randomized decision rule $\delta:X\times\mathscr{B}\to[0,1]$. If a randomized decision rule $\delta$ satisfying $\delta(x,\cdot)$ is a Dirac measure for all $x\in X$, then $\delta$ is called a non-randomized decision rule or decision rule in brief. Denote by $\mathcal{D}$ a class of decision rule $\delta:X\times\mathscr{B}\to[0,1]$. The risk function $R:\Theta\times\Delta\to\overline{\mathbb{R}}_{+}$ of a quantum Bayesian decision problem is defined by
        \begin{align}
        R(\theta,\delta)={\rm E}_{x}{\rm E}_{y}L(\theta,y)
        &=\int_{X}\langle\varphi(\theta),\nu(dx)\rangle\int_{Y}L(\theta,y)\delta(x,dy)\text{.}
        \end{align}
    The quantum Bayes risk ${\rm R}_{\Pi}:\Delta\to\overline{\mathbb{R}}_{+}$ of a quantum Bayesian decision problem is defined by
        \begin{align}
            {\rm R}_{\Pi}(\delta)={\rm E}_{\theta}R(\theta,\delta)=\int_{\Theta}\Pi(d\theta)
            \int_{X}\langle\varphi(\theta),\nu(dx)\rangle\int_{Y}L(\theta,y)\delta(x,dy).
        \end{align}
    Moreover, if there is a randomized decision rule $\check{\delta}\in\Delta$ such that 
            \begin{align}
                {\rm R}_{\Pi}(\check{\delta})=\inf_{\delta\in\Delta}{\rm R}_{\Pi}(\delta)\text{,}
            \end{align}
            then $\check{\delta}$ is called a quantum Bayes solution w.r.t. $(\Pi,\Delta)$.
    
    In the following we would like to explore the admissibility of quantum Bayesian solution. We say that a randomized decision rule $\delta$ is inadmissible iff there is another randomized decision rule $\delta'\in\Delta$ such that (\romannumeral1)
        $R(\theta,\delta')\leq R(\theta,\delta)$ for all $\theta\in\Theta$;
        (\romannumeral2) There is a $\theta'\in\Theta$ such that $R(\theta',\delta')<R(\theta',\delta)$.
    \begin{theorem}
        Assume that $\Theta$ is a metric space and $\mathscr{E}$ the Borel $\sigma$-algebra on $\Theta$.
        Denote by $\check{\delta}$ a quantum Bayes solution w.r.t. $(\Pi,\Delta)$.
        If 

            (\romannumeral1) $\Pi(E)>0$ for all open subsets $E$ of $\Theta$;

            (\romannumeral2) ${\rm R}_{\Pi}(\check{\delta})<\infty$;

            (\romannumeral3) $R(\theta,\delta)$ is a continuous function of $\theta$ for all $\delta\in\Delta$,\\
            then $\check{\delta}$ is admissible.
    \end{theorem}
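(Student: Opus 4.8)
The plan is to argue by contradiction, following the classical template that a Bayes rule with full-support prior and finite Bayes risk is admissible. Suppose $\check{\delta}$ is \emph{not} admissible. Then by definition there is a randomized decision rule $\delta' \in \Delta$ with $R(\theta, \delta') \le R(\theta, \check{\delta})$ for all $\theta \in \Theta$ and with $R(\theta', \delta') < R(\theta', \check{\delta})$ for some $\theta' \in \Theta$. My goal is to show that such a dominating rule would have strictly smaller quantum Bayes risk than $\check{\delta}$, contradicting the defining property of $\check{\delta}$ as the quantum Bayes solution w.r.t. $(\varphi, \mathcal{D})$.

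First I would set $h(\theta) := R(\theta, \check{\delta}) - R(\theta, \delta')$. Invoking hypothesis (iii) for both $\check{\delta}$ and $\delta'$, the functions $R(\cdot, \check{\delta})$ and $R(\cdot, \delta')$ are continuous on $\Theta$, so $h$ is continuous; moreover $h \ge 0$ everywhere and $h(\theta') > 0$. Hence $E := \{\theta \in \Theta : h(\theta) > 0\} = h^{-1}((0, +\infty))$ is open and nonempty (it contains $\theta'$), and by hypothesis (i) we obtain ${\rm P}(E) > 0$.

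Next I would convert this pointwise strict improvement on $E$ into a strict inequality of the integrated risks. Since $h \ge 0$ with $h > 0$ on the positive-measure set $E$, writing $E = \bigcup_{n}(E \cap \{h > 1/n\})$ shows that some $E \cap \{h > 1/n\}$ carries positive probability, whence $\int_{\Theta} h \, d{\rm P} \ge \int_{E} h \, d{\rm P} > 0$. Here hypothesis (ii), the finiteness ${\rm R}(\varphi, \check{\delta}) < \infty$, is essential: together with $R(\theta, \delta') \le R(\theta, \check{\delta})$ it forces ${\rm R}(\varphi, \delta')$ to be finite as well, so that $\int_{\Theta} h \, d{\rm P} = {\rm R}(\varphi, \check{\delta}) - {\rm R}(\varphi, \delta')$ is a genuine difference rather than an indeterminate $\infty - \infty$. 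Therefore ${\rm R}(\varphi, \delta') < {\rm R}(\varphi, \check{\delta})$, which contradicts the optimality ${\rm R}(\varphi, \check{\delta}) = \inf_{\delta \in \mathcal{D}} {\rm R}(\varphi, \delta)$ defining the quantum Bayes solution, and the proof is complete.

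The step I expect to be the main obstacle is reconciling the class $\Delta$ of randomized rules, from which the dominating $\delta'$ is drawn in the definition of inadmissibility, with the class $\mathcal{D}$ of nonrandomized rules over which $\check{\delta}$ is declared Bayes-optimal. The argument above tacitly uses $\delta' \in \mathcal{D}$ twice: once to invoke the continuity of $R(\cdot, \delta')$ through hypothesis (iii), which is stated only for $\delta \in \mathcal{D}$, and once to read off the final contradiction with the infimum over $\mathcal{D}$. To make the theorem airtight one would either restrict the notion of domination to competitors $\delta' \in \mathcal{D}$, or supply an auxiliary argument that randomization cannot strictly lower the quantum Bayes risk, so that $\inf_{\delta \in \mathcal{D}} {\rm R}(\varphi, \delta) = \inf_{\delta \in \Delta} {\rm R}(\varphi, \delta)$ and continuity extends to all of $\Delta$; the continuity-to-open-set and full-support-to-positive-measure steps are otherwise routine.
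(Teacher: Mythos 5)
Your proposal is correct and follows essentially the same route as the paper: argue by contradiction, use continuity (iii) to extract an open set of strict improvement (the paper uses a fixed $\epsilon$-neighbourhood $U_{\epsilon}(\theta')$, you use the superlevel set $h^{-1}((0,+\infty))$, a cosmetic difference), use the full-support hypothesis (i) to give it positive probability, and conclude ${\rm R}(\varphi,\delta')<{\rm R}(\varphi,\check{\delta})$, contradicting Bayes optimality. If anything you are slightly more careful than the paper, which never explicitly invokes (ii) to rule out $\infty-\infty$ and which silently resolves the $\Delta$-versus-$\mathcal{D}$ mismatch you flag by simply taking the dominating $\delta'$ to be nonrandomized.
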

    \begin{proof}
        If $\check{\delta}$ is inadmissible, 
        then there is a randomized decision rule $\delta'\in\Delta$ such that
        \begin{align}
            &R(\theta,\delta')\leq R(\theta,\check{\delta}),\ \forall\theta\in\Theta;\\
            &\exists\ \theta'\in\Theta\ \text{s.t.}\ R(\theta',\delta')<R(\theta',\check{\delta})\text{.}
        \end{align}
        By condition (\romannumeral3), there is a positive number $\epsilon>0$ together with an open ball $S_{\epsilon}(\theta')$
        such that 
        \begin{align}
            R(\theta,\delta')< R(\theta,\check{\delta})-\epsilon,\ \forall\theta\in S_{\epsilon}(\theta').
        \end{align}
        Then we have 
        \begin{align}
            {\rm R}_{\Pi}(\delta')&=\int_{S_{\epsilon}(\theta')}R(\theta,\delta')\Pi(d\theta)
            +\int_{\Theta\setminus S_{\epsilon}(\theta')}R(\theta,\delta')\Pi(d\theta)\\
            &<\int_{S_{\epsilon}(\theta')}[R(\theta,\check{\delta})-\epsilon]\Pi(d\theta)
            +\int_{\Theta\setminus S_{\epsilon}(\theta')}R(\theta,\check{\delta})\Pi(d\theta)\\
            &={\rm R}_{\Pi}(\check{\delta})-\epsilon\Pi[S_{\epsilon}(\theta')]\\
            &<{\rm R}_{\Pi}(\check{\delta}),
        \end{align}
        which contradicts that $\check{\delta}$ is a quantum Bayes solution w.r.t. $(\Pi,\Delta)$.        
    \end{proof} 
    \begin{theorem}
        If $\check{\delta}$ is the unique quantum Bayes solution w.r.t. $(\Pi,\Delta)$,
        then $\check{\delta}$ is admissible.
    \end{theorem}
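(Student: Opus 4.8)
The plan is to argue by contradiction, exploiting the fact that any decision rule dominating $\check{\delta}$ would itself turn out to be a quantum Bayes solution, which would violate uniqueness. This is the quantum analogue of the classical ``unique Bayes implies admissible'' argument, and it is worth noting at the outset that it requires neither the continuity of $\theta\mapsto R(\theta,\delta)$ nor the positivity condition ${\rm P}(E)>0$ needed in the preceding admissibility theorem.

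First I would suppose, for contradiction, that $\check{\delta}$ is inadmissible. By definition there is then a decision rule $\delta'$ (in the class $\mathcal{D}$ over which the quantum Bayes solution is defined) with
\begin{align}
R(\theta,\delta')\le R(\theta,\check{\delta})\ \ \forall\theta\in\Theta,\qquad R(\theta',\delta')<R(\theta',\check{\delta})\ \text{for some }\theta'\in\Theta.
\end{align}
Integrating the pointwise inequality against the prior ${\rm P}$ and invoking the definition of the quantum Bayes risk yields
\begin{align}
{\rm R}(\varphi,\delta')=\int_{\Theta}R(\theta,\delta')\,d{\rm P}(\theta)\le\int_{\Theta}R(\theta,\check{\delta})\,d{\rm P}(\theta)={\rm R}(\varphi,\check{\delta}).
\end{align}

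Next, since $\check{\delta}$ is a quantum Bayes solution, ${\rm R}(\varphi,\check{\delta})=\inf_{\delta\in\mathcal{D}}{\rm R}(\varphi,\delta)\le{\rm R}(\varphi,\delta')$. Combining this with the previous display forces ${\rm R}(\varphi,\delta')={\rm R}(\varphi,\check{\delta})$, so $\delta'$ is itself a quantum Bayes solution w.r.t. $(\varphi,\mathcal{D})$. On the other hand, the strict inequality $R(\theta',\delta')<R(\theta',\check{\delta})$ shows that $\delta'$ and $\check{\delta}$ have distinct risk functions, hence $\delta'\neq\check{\delta}$. Thus there exist two distinct quantum Bayes solutions, contradicting the assumed uniqueness of $\check{\delta}$; therefore $\check{\delta}$ must be admissible. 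The key structural feature here is that uniqueness lets the pointwise strict improvement at a single $\theta'$ do the work directly, without having to propagate that strictness through the integral.

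The main point requiring care is the matching of decision-rule classes: inadmissibility a priori permits a dominating rule in the larger class $\Delta$ of randomized rules, whereas uniqueness is asserted within $\mathcal{D}$. I would resolve this by recording that, because ${\rm R}(\varphi,\cdot)$ is affine in the randomization, its infimum over $\Delta$ is already attained in $\mathcal{D}$, so the dominating rule may be taken nonrandomized; alternatively one adopts the convention, consistent with the proof of the preceding admissibility theorem, that inadmissibility is tested against $\mathcal{D}$. A secondary technicality is ensuring the integrated inequality is meaningful, which holds as soon as ${\rm R}(\varphi,\check{\delta})<\infty$; this is automatic in the nontrivial case, since otherwise every rule in $\mathcal{D}$ would have infinite Bayes risk and the notion of a unique minimizer would degenerate.
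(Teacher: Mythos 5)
Your proposal is correct and follows essentially the same route as the paper: assume inadmissibility, integrate the dominating risk inequality against the prior to get ${\rm R}(\varphi,\delta')\leq{\rm R}(\varphi,\check{\delta})$, conclude $\delta'$ is also a quantum Bayes solution, and contradict uniqueness. Your added touches --- explicitly invoking the infimum property to force equality of Bayes risks, noting $\delta'\neq\check{\delta}$ via the strict inequality at $\theta'$, and flagging the $\Delta$ versus $\mathcal{D}$ class-matching issue (which the paper silently resolves by taking $\delta'\in\mathcal{D}$ in its proof despite defining inadmissibility via $\Delta$) --- are minor refinements of the same argument, not a different one.
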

    \begin{proof}
        If $\check{\delta}$ is inadmissible, 
        then there is a randomized decision rule $\delta'\in\Delta$ such that
        \begin{align}
            &R(\theta,\delta')\leq R(\theta,\check{\delta}),\ \forall\theta\in\Theta;\\
            &\exists\ \theta'\in\Theta\ \text{s.t.}\ R(\theta',\delta')<R(\theta',\check{\delta})\text{.}
        \end{align}
        Then we have 
        \begin{align}
            {\rm R}_{\Pi}(\delta')&=\int_{\Theta}R(\theta,\delta')\Pi(d\theta)
            \leq\int_{\Theta}R(\theta,\check{\delta})\Pi(d\theta)
            ={\rm R}_{\Pi}(\check{\delta}),
        \end{align}
        which shows that $\delta'$ is a quantum Bayes solution w.r.t. $(\Pi,\Delta)$ as well. But
        this contradicts that $\check{\delta}$ is the unique quantum Bayes solution w.r.t. $(\pi,\Delta)$.     
    \end{proof}
    We say that $\tilde{\delta}\in\mathcal{D}$ is a minimax decision rule iff it satisfies
        \begin{align}
            \sup_{\theta\in\Theta}R(\theta,\tilde{\delta})=\inf_{\delta\in\mathcal{D}}\sup_{\theta\in\Theta}R(\theta,\delta)\text{.}
        \end{align}
    \begin{theorem}
        Let $\check{\delta}$ be a quantum Bayes solution w.r.t. $(\Pi,\mathcal{D})$.
        If $R(\theta,\check{\delta})=c$ for all $\theta\in\Theta$, then $\check{\delta}$ is a minimax decision rule.
    \end{theorem}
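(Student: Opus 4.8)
The plan is to transplant the classical ``a Bayes rule with constant risk is minimax'' argument, which carries over verbatim since the quantum Bayes risk ${\rm R}(\varphi,\cdot)$ is still the average of the risk function $R(\theta,\cdot)$ against the probability measure ${\rm P}$. The entire argument rests on one elementary fact: an average against a probability measure never exceeds the supremum. Note also that, because the defining equality $R(\theta,\tilde{\delta})=\inf_{\delta\in\mathcal{D}}\sup_{\theta\in\Theta}R(\theta,\delta)$ for a minimax rule only makes sense when the left-hand side is constant in $\theta$, the present hypothesis $R(\theta,\check{\delta})=c$ is exactly what lets us verify that definition directly.

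First I would record the quantum Bayes risk of $\check{\delta}$. Since $R(\theta,\check{\delta})=c$ for every $\theta\in\Theta$ and ${\rm P}$ is a probability measure,
\begin{align}
{\rm R}(\varphi,\check{\delta})=\int_{\Theta}R(\theta,\check{\delta})\,d{\rm P}(\theta)=\int_{\Theta}c\,d{\rm P}(\theta)=c.
\end{align}
Next, for an arbitrary $\delta\in\mathcal{D}$ I would dominate its Bayes risk by its worst-case risk: since $R(\theta,\delta)\le\sup_{\theta'\in\Theta}R(\theta',\delta)$ for every $\theta$, integrating against ${\rm P}$ gives
\begin{align}
{\rm R}(\varphi,\delta)=\int_{\Theta}R(\theta,\delta)\,d{\rm P}(\theta)\le\sup_{\theta\in\Theta}R(\theta,\delta).
\end{align}
Combining this with the defining optimality of the quantum Bayes solution, namely ${\rm R}(\varphi,\delta)\ge{\rm R}(\varphi,\check{\delta})=c$, yields $\sup_{\theta\in\Theta}R(\theta,\delta)\ge c$ for every $\delta\in\mathcal{D}$, and hence $\inf_{\delta\in\mathcal{D}}\sup_{\theta\in\Theta}R(\theta,\delta)\ge c$.

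Finally I would supply the reverse inequality, which is immediate: taking $\delta=\check{\delta}$ and using the constant risk gives $\inf_{\delta\in\mathcal{D}}\sup_{\theta\in\Theta}R(\theta,\delta)\le\sup_{\theta\in\Theta}R(\theta,\check{\delta})=c$. The two inequalities force
\begin{align}
\sup_{\theta\in\Theta}R(\theta,\check{\delta})=c=\inf_{\delta\in\mathcal{D}}\sup_{\theta\in\Theta}R(\theta,\delta),
\end{align}
which is precisely the defining property of a minimax rule, so $\check{\delta}$ is minimax. There is no genuine obstacle here — the only point demanding care is that the Bayes risk is truly an average against a probability measure, so that the average–supremum inequality applies; this is guaranteed by the standing assumption that ${\rm P}$ is a probability measure. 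Alternatively, to match the style of the two preceding admissibility theorems, the same computation can be framed as a proof by contradiction: assuming some $\delta'\in\mathcal{D}$ with $\sup_{\theta\in\Theta}R(\theta,\delta')<c$ would force ${\rm R}(\varphi,\delta')\le\sup_{\theta\in\Theta}R(\theta,\delta')<c={\rm R}(\varphi,\check{\delta})$, contradicting the quantum Bayes optimality of $\check{\delta}$.
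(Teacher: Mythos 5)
Your proof is correct and follows essentially the same route as the paper: both arguments sandwich $c$ around $\inf_{\delta\in\mathcal{D}}\sup_{\theta\in\Theta}R(\theta,\delta)$, using the constant risk to get ${\rm R}(\varphi,\check{\delta})=c$, the average--supremum bound ${\rm R}(\varphi,\delta)\leq\sup_{\theta}R(\theta,\delta)$, the Bayes optimality of $\check{\delta}$ for the lower bound, and $\sup_{\theta}R(\theta,\check{\delta})=c\geq\inf_{\delta}\sup_{\theta}R(\theta,\delta)$ for the reverse. The only cosmetic difference is that the paper detours through the chain $\inf_{\delta}{\rm R}(\varphi,\delta)\leq\sup_{\varphi'\in\Lambda}\inf_{\delta}{\rm R}(\varphi',\delta)\leq\inf_{\delta}\sup_{\varphi'\in\Lambda}{\rm R}(\varphi',\delta)$ before bounding by $\inf_{\delta}\sup_{\theta}R(\theta,\delta)$, whereas you pass to that bound directly, which is if anything slightly cleaner.
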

    \begin{proof}
        On the one hand,
        \begin{align}
        c={\rm R}_{\Pi}(\check{\delta})&=\inf_{\delta\in\mathcal{D}}{\rm R}_{\Pi}(\delta)\\
        &\leq\sup_{\Pi'\in\mathcal{P}}\inf_{\delta\in\mathcal{D}}{\rm R}_{\Pi'}(\delta)\\
        &\leq\inf_{\delta\in\mathcal{D}}\sup_{\Pi'\in\mathcal{P}}{\rm R}_{\Pi'}(\delta)\\
        &\leq\inf_{\delta\in\mathcal{D}}\sup_{\theta\in\Theta}R(\theta,\delta),
    \end{align}
    where $\mathcal{P}$ is a class of prior distributions such that $\Pi\in\mathcal{P}$.
On the other hand, 
\begin{align}
    c=\sup_{\theta\in\Theta}R(\theta,\check{\delta})\ge\inf_{\delta\in \mathcal{D}}\sup_{\theta\in\Theta}R(\theta,\delta).
\end{align}
    \end{proof}
    \begin{theorem}\label{The5.4}
        Let $\{\Pi_{n}\}_{n=1}^{\infty}$ be a sequence of prior distributions and $\check{\delta}_{n}$ a quantum Bayes solution w.r.t. $(\Pi_{n},\mathcal{D})$.
        Let $\delta\in\mathcal{D}$ be a decision rule.
        If 
        \begin{align}\label{Inequ66}
            \sup_{\theta\in\Theta}R(\theta,\delta)\leq\limsup_{n\to\infty}{\rm R}_{\Pi_{n}}(\check{\delta}_{n}),
        \end{align}
        then $\delta$ is a minimax decision rule. 
    \end{theorem}
    \begin{proof}
        If $\delta$ is not a minimax decision rule, then there is a decision rule $\delta'\in\mathcal{D}$ such that 
        \begin{align}
            \sup_{\theta\in\Theta}R(\theta,\delta')<\sup_{\theta\in\Theta}R(\theta,\delta).
        \end{align}
        For any $n\ge1$ we have
        \begin{align}
            {\rm R}_{\Pi_{n}}(\check{\delta}_{n})&\leq{\rm R}_{\Pi_{n}}(\delta')\\
            &\leq\sup_{\theta\in\Theta}R(\theta,\delta')\\
            &<\sup_{\theta\in\Theta}R(\theta,\delta),
        \end{align}
        which contradicts the condition (\ref{Inequ66}).        
    \end{proof}
    
    Let $\varphi\in\mathfrak{S}(\mathscr{M})$ be a prior normal state, $\lambda:\mathscr{E}\to\mathscr{M}$ an observable, $\mathfrak{I}:\mathscr{A}\to\mathcal{B}^{+}(\mathscr{M}_{*})$ an instrument and $\{\varphi_{x}:x\in X\}$ a family of posterior normal states w.r.t. $(\mathfrak{I},\varphi)$.
\begin{definition}
    The quantum posterior risk ${\rm R}_{\varphi}:\Delta\to\overline{\mathbb{R}}_{+}$ of a quantum Bayesian decision problem is defined by
    \begin{align}
       {\rm R}_{\varphi}(\delta)={\rm E}_{\theta\lvert x}{\rm E}_{y}L(\theta,y)=\int_{\Theta}\langle\varphi_{x},\lambda(d\theta)\rangle\int_{Y}L(\theta,y)\delta(x,dy).
    \end{align}
        Moreover, if there is a randomized decision rule $\delta^{*}\in\Delta$ such that 
        \begin{align}
            {\rm R}_{\varphi}(\delta^{*})=\inf_{\delta\in\Delta}{\rm R}_{\varphi}(\delta)\text{,}
        \end{align}
        then $\delta^{*}$ is called a quantum posterior solution w.r.t. $(\varphi,\Delta)$.
\end{definition}
Apparently our framework retains the classical one as a special case.
In classical statistics, a statistical model is a triad $(X,\mathscr{A},\mathcal{P})$ where $(X,\mathscr{A})$ is a measurable space and $\mathcal{P}$ a class of probability measures on $(X,\mathscr{A})$. Moreover, if $\mathcal{P}=\{{\rm P}_{\theta}:\theta\in\Theta\}$, then $(X,\mathscr{A},{\rm P}_{\theta})$ is called a parametric statistical model. If we replace the parametric quantum statistical model $(\nu,\varphi(\theta))$ in a quantum Bayesian decision problem with a parametric statistical model $(X,\mathscr{A},{\rm P}_{\theta})$ where ${\rm P}_{\theta}$ is a Markov kernel, we will get a classical Bayesian decision problem. 

It is noted that in classical Bayesian decision, a posterior solution w.r.t. $(\Pi,\Delta)$ is always a Bayes solution w.r.t. $(\Pi,\Delta)$ and each Bayes solution w.r.t. $(\Pi,\Delta)$ minimizes the posterior risk for ${\rm P}=\int{\rm P}_{\theta}d\Pi$ almost all $x\in X$. However, this is not true in the quantum case. To see this, assume that the map $(\theta,x)\mapsto\int_{Y} L(\theta,y)\delta(x,dy)$ is $\mathscr{E}\times\mathscr{A}$ measurable for all $\delta\in\Delta$ and $\langle\varphi(\theta),\nu(A)\rangle$ is a Markov kernel defined by $\int_{A}p(x|\theta)\mu(dx),\ \forall A\in\mathscr{A}$, where $p(x|\theta)$ is a nonnegative $\mathscr{E}\times\mathscr{A}$ measurable real valued function satisfying $\int_{X}p(x\lvert\theta)\mu(dx)=1,\ \forall\theta\in\Theta$, and $\mu$ is a $\sigma$-finite measure on $(X,\mathscr{A})$. Then each quantum Bayes solution w.r.t. $(\Pi=\int\pi d\nu,\Delta)$ shall minimize the posterior risk $\int_{\Theta}\pi(\theta|x)\nu(d\theta)\int_{Y}L(\theta,y)\delta(x,dy)$ for ${\rm P}$ almost all $x\in X$, where $\pi\in L^{1}_{+,1}(\nu)$ and $\nu$ is a $\sigma$-finite measure on $(\Theta,\mathscr{E})$. Since probability measures $\langle\varphi_{x},\lambda(\cdot)\rangle$ and $\int_{(\cdot)}\pi(\theta|x)\nu(d\theta)$ can be quite different, it is likely to happen that there is a quantum Bayes solution w.r.t. $(\Pi,\Delta)$ not minimizing the quantum posterior risk $\int_{\Theta}\langle\varphi_{x},\lambda(d\theta)\rangle\int_{Y}L(\theta,y)\delta(x,dy)$ for ${\rm P}$ almost all $x\in X$.

In the following we will concentrate on a quantum analogue of Bayesian inference. This topic mainly consists of four parts: quantum posterior point estimation, quantum posterior credible interval, quantum posterior hypothesis testing and quantum posterior prediction. One could see that in terms of framework, quantum Bayesian inference is almost the same as the classical one.

Assume that $(X,\mathscr{A}),(Y,\mathscr{B})$ and $(\Theta,\mathscr{E})$ are all $(\mathbb{R},\mathscr{B}(\mathbb{R}))$. 
Denote by ${\rm P}_{\varphi_{x}}$ the probability measure 
$\langle\varphi_{x},\lambda(\cdot)\rangle$ and by ${\rm F}(\theta\lvert x)$ the cumulative distribution function corresponding to ${\rm P}_{\varphi_{x}}$.
\begin{theorem}
    Suppose the loss function $L(\theta,y)=c(\theta)(\theta-y)^{2}$, where $c(\theta)\ge 0$ for all $\theta\in\Theta$. 
    Then the quantum posterior solution $\delta^{*}$ w.r.t. $(\varphi,\mathcal{D})$ is
    \begin{align}
        \delta^{*}(x,B)=1_{B}\left[\frac{{\rm E}_{\theta|x}\theta c(\theta)}{{\rm E}_{\theta|x}c(\theta)}\right]\text{.}
    \end{align}
    Specifically, if $c(\theta)=1$ for all $\theta\in\Theta$, then the quantum posterior solution $\delta^{*}$ w.r.t. $(\varphi,\mathcal{D})$ 
    is
    \begin{align}
        \delta^{*}(x,B)=1_{B}({\rm E}_{\theta\lvert x}\theta)\text{.}
    \end{align}
\end{theorem}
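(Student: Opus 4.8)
The plan is to reduce the optimization to a pointwise (in $x$) classical minimization against the posterior distribution ${\rm P}_{\varphi_{x}}$, so that all of the quantum structure is absorbed into the family of posterior normal states $\{\varphi_{x}:x\in X\}$ furnished by the quantum Bayes rule. First I would exploit that $\delta\in\mathcal{D}$ is nonrandomized, so $\delta(x)$ is the degenerate distribution at $\gamma_{\delta}(x)$ and the inner integral collapses to a point evaluation:
\begin{align}
\int_{Y}L(\theta,y)d\delta(x)(y)=L(\theta,\gamma_{\delta}(x))=c(\theta)(\theta-\gamma_{\delta}(x))^{2}.
\end{align}
Substituting into the definition of the quantum posterior risk gives
\begin{align}
{\rm R}(\varphi_{x},\delta)={\rm E}_{\theta\lvert x}[c(\theta)(\theta-\gamma_{\delta}(x))^{2}].
\end{align}

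Next, for each fixed $x$ I would treat $y=\gamma_{\delta}(x)$ as a free real parameter and minimize
\begin{align}
h_{x}(y):={\rm E}_{\theta\lvert x}[c(\theta)(\theta-y)^{2}]={\rm E}_{\theta\lvert x}[\theta^{2}c(\theta)]-2y\,{\rm E}_{\theta\lvert x}[\theta c(\theta)]+y^{2}\,{\rm E}_{\theta\lvert x}[c(\theta)].
\end{align}
Since $c\ge 0$, this is a quadratic in $y$ with nonnegative leading coefficient ${\rm E}_{\theta\lvert x}[c(\theta)]$; completing the square (equivalently, solving $h_{x}'(y)=0$) yields the unique minimizer
\begin{align}
y^{*}(x)=\frac{{\rm E}_{\theta\lvert x}[\theta c(\theta)]}{{\rm E}_{\theta\lvert x}[c(\theta)]}
\end{align}
whenever ${\rm E}_{\theta\lvert x}[c(\theta)]>0$. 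I would then take $\delta^{*}$ to be the nonrandomized rule concentrating unit mass at $y^{*}(x)$, i.e. $\gamma_{\delta^{*}}(x)=y^{*}(x)$; since $h_{x}(y^{*}(x))\le h_{x}(\gamma_{\delta}(x))={\rm R}(\varphi_{x},\delta)$ for every $\delta\in\mathcal{D}$, this $\delta^{*}$ attains $\inf_{\delta\in\mathcal{D}}{\rm R}(\varphi_{x},\delta)$ for every admissible $x$, so it is the quantum posterior solution. The special case $c\equiv 1$ immediately gives $y^{*}(x)={\rm E}_{\theta\lvert x}(\theta)$, the posterior mean.

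The conceptual content is thin once this reduction is made, so the main obstacles are the two measure-theoretic hypotheses hidden in the statement, both routine rather than deep. First, for $\delta^{*}$ to be a legitimate element of $\mathcal{D}$ the map $x\mapsto y^{*}(x)$ must be $\mathscr{A}$ measurable; this follows from the explicit quotient formula together with the measurability of $x\mapsto{\rm E}_{\theta\lvert x}[\theta c(\theta)]$ and $x\mapsto{\rm E}_{\theta\lvert x}[c(\theta)]$, which in turn rests on the weak* measurability of the family $\{\varphi_{x}:x\in X\}$ guaranteed earlier. Second, for the quadratic minimization to be well posed I need the integrability and nondegeneracy conditions $0<{\rm E}_{\theta\lvert x}[c(\theta)]<\infty$ and ${\rm E}_{\theta\lvert x}[\theta^{2}c(\theta)]<\infty$ to hold for ${\rm P}_{\varphi}$-almost every $x$; on the exceptional set the posterior risk is either infinite, so optimality is vacuous, or the weight ${\rm E}_{\theta\lvert x}[c(\theta)]$ vanishes and the minimizer is indeterminate. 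Once these points are dispatched the optimality itself is immediate from the pointwise quadratic argument.
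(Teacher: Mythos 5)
Your proposal is correct and takes essentially the same route as the paper: collapse the inner integral using the degeneracy of $\delta(x)$ to get ${\rm R}(\varphi_{x},\delta)={\rm E}_{\theta\lvert x}\{c(\theta)[\theta-\gamma_{\delta}(x)]^{2}\}$, then minimize this quadratic in $\gamma_{\delta}(x)$ pointwise in $x$, which the paper does via the first-order condition and you do equivalently by completing the square. Your additional care about the measurability of $x\mapsto y^{*}(x)$ and the nondegeneracy condition $0<{\rm E}_{\theta\lvert x}[c(\theta)]<\infty$ only tightens details the paper leaves implicit in its ``straight calculation.''
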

\begin{proof}
    Straight calculation shows that the quantum posterior risk 
    \begin{align}
        {\rm R}_{\varphi}(\delta)&={\rm E}_{\theta\lvert x}{\rm E}_{y}c(\theta)(\theta-y)^{2}\\
        &={\rm E}_{\theta\lvert x}c(\theta)[\theta-\delta(x)]^{2},
    \end{align}
    where $\delta(x)\in Y$ is the point such that $\delta(x,B)=1_{B}[\delta(x)],\ \forall B\in\mathscr{B}$.
    Apparently ${\rm R}_{\varphi}(\delta)$ takes its minimum when
    \begin{align}   
        \delta(x)={\rm E}_{\theta\lvert x}\theta c(\theta)/{\rm E}_{\theta\lvert x}c(\theta).
    \end{align}
\end{proof}
\begin{theorem}
    Suppose the loss function $L(\theta,y)=\left\{
        \begin{aligned}
            k_{0}(\theta-y) & \ \ \ if\ \theta> y\\
            k_{1}(y-\theta) & \ \ \ if\ \theta\leq y
        \end{aligned}
        \right.
    $, where $k_{0},k_{1}\ge 0$.
    Then the quantum posterior solution $\delta^{*}$ w.r.t. $(\varphi,\mathcal{D})$
    is
    \begin{align}
        \delta^{*}(x,B)=1_{B}\left[\xi_{\theta\lvert x}\left(\cfrac{k_{0}}{k_{0}+k_{1}}\right)\right]\text{,}
    \end{align}
    where $\xi_{\theta\lvert x}(p):=\inf\{\theta\in\Theta:{\rm F}(\theta\lvert x)\ge p\}$ is the $p$ quantile of ${\rm F}(\theta|x)$.
    Specifically, if $k_{0}=k_{1}=1$, then the quantum posterior solution $\delta^{*}$ w.r.t. $(\varphi,\mathcal{D})$
    is
    \begin{align}
        \delta^{*}(x,B)=1_{B}[\xi_{\theta\lvert x}(0.5)]\text{.}
    \end{align}
\end{theorem}
\begin{proof}
    Let $p=k_{0}/(k_{0}+k_{1})$. 
    Assume that $y>\xi_{\theta\lvert x}(p)$.
    A little calculation shows that
    \begin{align}
        L[\theta,\xi_{\theta\lvert x}(p)]-L(\theta,y)=\left\{
            \begin{aligned}
                k_{1}[\xi_{\theta\lvert x}(p)-y] & \ \ \ if\ \theta\leq\xi_{\theta\lvert x}(p)\\
                (k_{0}+k_{1})\theta-k_{0}\xi_{\theta\lvert x}(p)-k_{1}y & \ \ \ if\ \xi_{\theta\lvert x}(p)<\theta\leq y\\
                k_{0}[y-\xi_{\theta\lvert x}(p)] & \ \ \ if\ \theta> y
            \end{aligned}
            \right.,
    \end{align}
    and 
    \begin{align}
        L[\theta,\xi_{\theta\lvert x}(p)]-L(\theta,y)&\leq k_{1}[\xi_{\theta\lvert x}(p)-y]1_{(-\infty,\xi_{\theta\lvert x}(p)]}\\
        &\ \ \ +k_{0}[y-\xi_{\theta\lvert x}(p)]1_{(\xi_{\theta\lvert x}(p),+\infty)}
    \end{align}
    since $(k_{0}+k_{1})\theta-k_{0}\xi_{\theta\lvert x}(p)-k_{1}y
    <(k_{0}+k_{1})y-k_{0}\xi_{\theta\lvert x}(p)-k_{1}y=k_{0}[y-\xi_{\theta\lvert x}(p)]$. Let $\delta_{0}\in\mathcal{D}$ such that $\delta_{0}(x)=\xi_{\theta\lvert x}(p)$.
    Then for any $\delta\in\mathcal{D}$ such that $\delta(x)>\xi_{\theta\lvert x}(p)$ we have
    \begin{align}
        {\rm R}_{\varphi}(\delta_{0})-{\rm R}_{\varphi}(\delta)
        &={\rm E}_{\theta\lvert x}\{L[\theta,\xi_{\theta\lvert x}(p)]-L[\theta,\delta(x)]\}\\
        &\leq k_{1}[\xi_{\theta\lvert x}(p)-\delta(x)]{\rm P}_{\varphi_{x}}[\theta\leq\xi_{\theta\lvert x}(p)]\\
        &\ \ \ +k_{0}[\delta(x)-\xi_{\theta\lvert x}(p)]{\rm P}_{\varphi_{x}}[\theta>\xi_{\theta\lvert x}(p)]\\
        &\leq\cfrac{k_{0}k_{1}}{k_{0}+k_{1}}[\xi_{\theta\lvert x}(p)-\delta(x)]\\
        &\ \ \ +\cfrac{k_{0}k_{1}}{k_{0}+k_{1}}[\delta(x)-\xi_{\theta\lvert x}(p)]=0,
    \end{align}
    which is equivalent to 
    \begin{align}
        {\rm R}_{\varphi}(\delta_{0})\leq{\rm R}_{\varphi}(\delta).
    \end{align}
    The inequality above also holds when $\delta(x)<\xi_{\theta\lvert x}(p)$
    and thus $\delta_{0}$ is the quantum posterior solution w.r.t. $(\varphi,\mathcal{D})$. 
\end{proof}
\begin{theorem}
    Suppose the loss function $L(\theta,y)=\left\{
        \begin{aligned}
            1 & \ \ \ if\ \lvert\theta-y\rvert>\epsilon\\
            0 & \ \ \ if\ \lvert\theta-y\rvert\leq\epsilon
        \end{aligned}
        \right.
    $ and ${\rm P}_{\varphi_{x}}(\cdot)=\int_{(\cdot)}p_{x}(\theta)\mu_{x}(d\theta)$ for a Borel measurable real valued function $p_{x}(\theta)$ and a $\sigma$-finite measure $\mu_{x}$ on $(\mathbb{R},\mathscr{B}(\mathbb{R}))$ and $p_{x}(\theta)$ has maximum value. Then as $\epsilon\to 0$ a quantum posterior solution $\delta^{*}$ w.r.t. $(\varphi,\mathcal{D})$
    is
    \begin{align}
        \delta^{*}(x,B)=1_{B}[\max_{\theta\in\Theta}p_{x}(\theta)]\text{.}
    \end{align}
\end{theorem}
\begin{proof}
    Straight calculation shows that 
    \begin{align}
        {\rm R}_{\varphi}(\delta)&={\rm E}_{\theta\lvert x}{\rm E}_{y}L(\theta,y)\\
        &=1-\int_{\delta(x)-\epsilon}^{\delta(x)+\epsilon}p_{x}(\theta)\mu_{x}(d\theta).
    \end{align}
    Apparently minimizing ${\rm R}_{\varphi}(\delta)$ is equivalent to maximizing
    $\int_{\delta(x)-\epsilon}^{\delta(x)+\epsilon}p_{x}(\theta)\mu_{x}(d\theta)$ 
    and thus $\delta(x)=\max_{\theta\in\Theta}p_{x}(\theta)$ as $\epsilon\to 0$.  
\end{proof}
We say that $\hat{\theta}_{{\rm E}}:={\rm E}_{\theta\lvert x}(\theta)$ is the quantum posterior mean estimator of $\theta$; $\hat{\theta}_{{\rm Q}}:=\xi_{\theta\lvert x}(0.5)$ is the quantum posterior median estimator of $\theta$ and $\hat{\theta}_{{\rm M}}:=\max_{\theta\in\Theta}p_{x}(\theta)$ is the quantum posterior mode estimator of $\theta$.
\begin{remark}
Under the premise of Theorem \ref{the4}, further assume that $\Theta$ is a metric space, $\mathscr{E}$ is the Borel $\sigma$-algebra on $\Theta$, $X$ consists of a single element and equipped with the trivial topology, and ${\rm tr}(\rho_{*}\lambda)$ is a Dirac measure. Then by Theorem \ref{the4} we have
\begin{align}
0\leq\lvert{\rm tr}[(\rho_{X_{1}\cdots X_{n}}-\rho_{*})\lambda]\rvert\leq\lVert\lambda\rVert\lVert\rho_{X_{1}\cdots X_{n}}-\rho_{*}\rVert_{1}\to 0,
\end{align}
as $n\to\infty$. This implies that ${\rm tr}(\rho_{X_{1}\cdots X_{n}}\lambda)$ converges weakly to ${\rm tr}(\rho_{*}\lambda)$ in probability ${\rm tr}(\mathfrak{I}^{n}\rho_{0})$. Thus $\hat{\theta}_{{\rm E}}$, $\hat{\theta}_{{\rm Q}}$ and $\hat{\theta}_{{\rm M}}$ are all weakly consistent.
\end{remark}
Assume that $Y$ is the set of closed intervals on the real line and the loss function 
$L(\theta,y)=k_{0}l(y)+k_{1}[1-1_{y}(\theta)]$, where $l(y)$ is the length of $y$ and $k_{0},k_{1}\ge 0$.
Then the quantum posterior risk
\begin{align}
    {\rm R}_{\varphi}(\delta)&={\rm E}_{\theta\lvert x}{\rm E}_{y}\{k_{0}l(y)+k_{1}[1-1_{y}(\theta)]\}\\
    &=k_{0}l[\delta(x)]+k_{1}{\rm E}_{\theta\lvert x}[1-1_{\delta(x)}(\theta)]\\
    &=k_{0}l[\delta(x)]+k_{1}{\rm P}_{\varphi_{x}}[\theta\notin\delta(x)].
\end{align}
Here a quantum posterior solution $\delta^{*}$ w.r.t. $(\varphi,\mathcal{D})$ is generally not easy to obtain. 
    A common strategy is to keep $l[\delta(x)]$ as small as possible while controlling ${\rm P}_{\varphi_{x}}[\theta\notin\delta(x)]$
    not to exceed a given small positive number.
	We say that a decision rule $\delta$ is a $1-\alpha$ quantum posterior credible interval iff
    \begin{align}
        {\rm P}_{\varphi_{x}}[\theta\in\delta(x)]\ge 1-\alpha,\ \forall x\in X,
    \end{align} 
    where $0<\alpha<1$.

Assume that $(Y,\mathscr{F})=([n],2^{[n]})$ and the loss function 
$L(\theta,y)=\sum_{i=1}^{n}\delta_{iy}1_{\Theta_{i}^{c}}(\theta)$, where $\delta_{(\cdot\cdot)}$ is the Kronecker delta and $\Theta_{i},i\in[n]$ are mutually disjoint ${\rm P}_{\varphi_{x}}$ measurable subsets of $\Theta$. A little calculation shows that the quantum posterior risk
    \begin{align}
        {\rm R}_{\varphi}(\delta)&={\rm E}_{\theta\lvert x}{\rm E}_{y}L(\theta,y)\\
        &=\sum_{i=1}^{n}\delta_{i\delta(x)}{\rm P}_{\varphi_{x}}(\theta\notin\Theta_{i})
    \end{align}
    and apparently ${\rm R}_{\varphi}(\delta)$ takes its minimum when $\delta(x)=\mathop{\arg\max}_{i\in[n]}{\rm P}_{\varphi_{x}}(\theta\in\Theta_{i})$ so that the quantum posterior solution $\delta^{*}$ w.r.t. $(\varphi,\mathcal{D})$ is
\begin{align}
    \delta^{*}(x,B)=1_{B}[\mathop{\arg\max}_{i\in[n]}{\rm P}_{\varphi_{x}}(\theta\in\Theta_{i})].
\end{align}
We say that the $\delta^{*}$ above is the quantum posterior testing rule of the multiple hypothesis testing
    \begin{align}
        H_{i}:\theta\in\Theta_{i},\ i\in[n].
    \end{align}
    
In classical Bayesian inference, the posterior predictive distribution of a $Z$-valued random variable is given by the Markov kernel $\int_{\Theta}K(\theta,x,C)\pi(\theta|x)\nu(d\theta)$, where $(Z,\mathscr{C})$ is a measurable space, $K:(\theta,x,C)\mapsto K(\theta,x,C),\ \forall (\theta,x,C)\in\Theta\times X\times\mathscr{C}$ a Markov kernel and $\pi(\theta|x)$ the posterior density. Note that the section $K_{x}$ of $K$ is not only an instrument but an observable as well. Let $\eta:\mathscr{C}\to\mathscr{M}$ be an observable. We say that $\langle\varphi_{x},\eta\rangle$ is the quantum posterior predictive distribution of $\eta$.

\section{Discussion}\label{sec6}
In a sense, quantum Bayes' rule tells a quantum Bayesian how to update her knowledge of an object based on observations. According to quantum Bayes' rule, a quantum analogue of Bayesian inference, which not only retains the classical one as a special case but possesses many new features as well, is put forward. However, there is still a lot to explore. 

(\romannumeral1) What is the sufficient and necessary conditions that a quantum posterior solution w.r.t. $(\varphi,\Delta)$ is a quantum Bayes solution w.r.t. $(\Pi,\Delta)$ and vice versa? 

(\romannumeral2) In classical Bayesian inference, if the true value of the unknown parameter $\theta$ is $\theta_{0}$, then
    \begin{align}
    \Pi_{0,x}=\cfrac{\int 1_{(\cdot)}p(x|\theta)\Pi_{0}(d\theta)}{\int_{\Theta} p(x|\theta)\Pi_{0}(d\theta)}=\Pi_{0},
    \end{align}
where $p(x\lvert\theta)$ is a nonnegative $\mathscr{E}\times\mathscr{A}$ measurable real valued function satisfying $\int_{X}p(x\lvert\theta)\mu(dx)=1,\ \forall\theta\in\Theta$, $\mu$ is a $\sigma$-finite measure on $(X,\mathscr{A})$. This indicates that $\Pi_{0}$ is never disturbed by a measurement with an apparatus corresponding to an instrument $K=\int 1_{(\cdot)}p(x|\theta)\mu(dx)$. However, in the definition of the weak consistency of posterior normal state, the ground truth $\varphi_{0}$ is likely to be disturbed by a measurement with an apparatus corresponding to an instrument $\mathfrak{I}_{n}$. So is it necessary for $\varphi_{0}$ to satisfy for a sequence of instruments $\left\{\mathfrak{I}_{n}\in\text{Ins}\left(\prod_{i=1}^{n}\mathscr{A}_{i},\mathscr{M}_{*}\right)\right\}_{n=1}^{\infty}$ such that $\langle\mathfrak{I}_{n+1}(A^{(n)}\times X_{n+1})\varphi,1\rangle=\langle\mathfrak{I}_{n}(A^{(n)})\varphi,1\rangle$ for any $A^{(n)}\in\prod_{i=1}^{n}\mathscr{A}_{i}$ and $n\in\mathbb{N}_{+}$ we have $\varphi_{0,X^{(n)}}=\varphi_{0}\ (\text{a.s.}),n\in\mathbb{N}_{+}$ and  $\langle\varphi_{0},\lambda\rangle=\Pi_{0}$?

(\romannumeral3) What is the sufficient and necessary conditions for the weak consistency of  posterior normal state?

(\romannumeral4) Topics concerning covariant quantum posterior point estimator.

(\romannumeral5) Choice and robustness of prior normal state, e.g. global or local sensitivity measures of a class of prior normal states.

\section*{Acknowledgments}
The author would like to thank Professor Naihui Chen and Professor Weihua Liu for their discussion and advice on this paper.

\nocite{*}

\bibliography{apssamp}

\end{document}